\documentclass[a4paper,UKenglish,cleveref,autoref, thm-restate]{lipics-v2021}

\usepackage{tikz}
\usetikzlibrary{arrows.meta}
\newcommand{\dist}{\operatorname{dist}}
\newtheorem{cma}{\bf Reduction Rule CMA}
\newtheorem{defn}{Definition}%
\title{Parameterized Complexity of Connected Network Microaggregation: The Role of Cluster Size}

\titlerunning{Connected Network Microaggregation} 

\author{Ajinkya Gaikwad}{Czech Technical University in Prague
Prague, Czech Republic}{ajinkya.gaikwad@fit.cvut.cz}{https://orcid.org/0000-0002-7514-0708}{}

\author{Dušan Knop}{Czech Technical University in Prague
Prague, Czech Republic}{Dusan.Knop@fit.cvut.cz}{https://orcid.org/0000-0003-2588-5709}{}

\author{Tomáš Valla}{Czech Technical University in Prague
Prague, Czech Republic}{Tomas.Valla@fit.cvut.cz}{https://orcid.org/0000-0003-1228-7160}{}

\authorrunning{A. Gaikwad}

\Copyright{Ajinkya Gaikwad, Dušan Knop, and Tomáš Valla}

\ccsdesc[500]{Theory of computation~Fixed parameter tractability}

\keywords{Parameterized Complexity, FPT, Network Microaggregation} 

\category{} 

\relatedversion{} 

\acknowledgements{Research supported by the Czech Science Foundation Grant no.~24-12046S. This work was co-funded by the European Union under the project Robotics and Advanced Industrial Production (reg. no.~CZ.02.01.01/00/22\_008/0004590).}

\nolinenumbers 

\EventEditors{John Q. Open and Joan R. Access}
\EventNoEds{2}
\EventLongTitle{42nd Conference on Very Important Topics (CVIT 2016)}
\EventShortTitle{CVIT 2016}
\EventAcronym{CVIT}
\EventYear{2016}
\EventDate{December 24--27, 2016}
\EventLocation{Little Whinging, United Kingdom}
\EventLogo{}
\SeriesVolume{42}
\ArticleNo{23}

\begin{document}

\maketitle


\begin{abstract}
Network microaggregation is a fundamental technique in statistical disclosure control, where vertices of a graph are partitioned into clusters satisfying size constraints and admitting a center within bounded distance. We study the parameterized complexity of the \emph{unweighted Connected Network Microaggregation} problem, focusing on the interplay between structural parameters and natural clustering parameters such as the distance bound $d$ and the cluster size gap $u-\ell$.

While the weighted variant is $\mathrm{W[1]}$-hard when parameterized by vertex cover, we show that the unweighted connected variant becomes fixed-parameter tractable for neighborhood diversity, and hence also for vertex cover. However, this tractability does not extend to more general structural parameters: the problem remains $\mathrm{W[1]}$-hard even for parameters such as vertex deletion to paths, stars, or cliques. We significantly strengthen these hardness results by showing that they persist even in highly permissive regimes, namely for all $d \ge 2$ and any fixed gap $u-\ell$. This demonstrates that the parameters $d$ and $u-\ell$ are ineffective in combination with structural parameters.

We show that augmenting structural parameters with the cluster size bound $u$ restores tractability, yielding fixed-parameter algorithms for parameters such as treewidth and cluster vertex deletion combined with $u$. Moreover, we establish that $u$ is essential for tractability, as the problem remains $\mathrm{W[1]}$-hard when structural parameters are considered alone.

From a kernelization perspective, we prove that the problem does not admit a polynomial kernel when parameterized by vertex cover unless $\mathrm{coNP} \subseteq \mathrm{NP/poly}$, and that this lower bound persists even when the distance constraint becomes vacuous. On the positive side, adding $u$ yields a polynomial kernel for vertex cover, while kernelization remains unlikely for more general structural parameters even when combined with $u$. Finally, we show that the problem remains NP-hard on graphs of bounded clique-width.
\end{abstract} 

\newpage
\setcounter{page}{1}

\section{Introduction}

In an era where large-scale relational data is routinely collected and analyzed, preserving privacy while maintaining data utility has become a fundamental challenge. A central tool in statistical disclosure control is \emph{microaggregation}~\cite{Domingo-Ferrer2009,Yan2022}, which partitions data into clusters of bounded size and releases aggregate representatives (centers) instead of individual data points. Importantly, centers are not required to belong to their corresponding clusters, and a single vertex may serve as a center for multiple clusters. This approach provides a principled way to achieve anonymity guarantees such as $k$-anonymity~\cite{10.1142/S0218488502001648}, while retaining structural information about the underlying data.

When the input is a network, this leads to the \textsc{Network Microaggregation} problem, where the vertices of a graph must be partitioned into clusters that satisfy size constraints and admit a center within bounded distance. A natural and practically motivated variant is \emph{Connected Network Microaggregation} (\textsc{CNMA}), where each cluster is additionally required to induce a connected subgraph. As observed in prior work~\cite{Blažej_Ganian_Knop_Pokorný_Schierreich_Simonov_2023}, the connectivity requirement fundamentally changes the behavior of the problem: unlike the classical setting, clusters can no longer be arbitrarily split, and upper bounds on cluster sizes become essential for controlling the quality of the anonymized output.

Despite its importance, the parameterized complexity of network microaggregation has only recently been systematically studied. The seminal work of Blažej et al.~\cite{Blažej_Ganian_Knop_Pokorný_Schierreich_Simonov_2023} established a near-complete complexity landscape for both the unrestricted and connected variants under a wide range of structural and problem-specific parameters. In particular, they showed that while \textsc{CNMA} is fixed-parameter tractable under several combined parameterizations (e.g., vertex cover plus cluster size bound), it remains $\mathrm{W[1]}$-hard when parameterized by vertex cover alone. This highlights an inherent computational barrier in handling the connectivity constraint.

\medskip
\noindent
\textbf{Our focus.}
In this work, we revisit the problem from a more refined perspective by restricting attention to the \emph{unweighted} setting and focusing exclusively on the \emph{connected} variant. This restriction is well-motivated: even in the unweighted case, \textsc{CNMA} remains $\mathrm{W[1]}$-hard when parameterized by several restricted structural parameters, and hence already captures the core algorithmic challenges of the problem.

Our primary goal is to investigate whether combining such structural parameters with natural input parameters governing the clustering—such as the distance bound $d$, the cluster size bound $u$, and the slack $u-\ell$—can lead to tractability. In particular, we systematically explore parameterizations that augment structural measures (e.g., vertex cover, treewidth, or deletion distance to simple graph classes) with these additional parameters, and examine whether such combinations yield fixed-parameter tractability or efficient preprocessing.

By isolating this variant and studying these combined parameterizations, we obtain a sharper and more nuanced understanding of the parameterized complexity landscape, identifying precisely when positive algorithmic results can be achieved and when inherent hardness persists.

\medskip
\noindent
\textbf{Our Contribution.}
As an initial observation, it was noted in previous work that \textsc{Connected Network Microaggregation} remains NP-hard even for fixed values of $d$ and $u$. This implies that bounding these parameters alone is insufficient for tractability, and one must additionally exploit structural properties of the underlying graph.

It is known from previous work that the weighted version of the problem is $\mathrm{W[1]}$-hard when parameterized by the vertex cover number. This naturally raises the question of whether the hardness persists in the unweighted setting. We answer this in the negative by showing that the unweighted connected variant is fixed-parameter tractable when parameterized by neighborhood diversity, and consequently also when parameterized by vertex cover.

Having established tractability for vertex cover, the next natural question is whether this can be extended to more general structural parameters such as treewidth. We show that this is not the case: even under strong restrictions such as vertex deletion to paths, stars, or cliques, the problem remains $\mathrm{W[1]}$-hard. Moreover, we strengthen these hardness results by showing that they continue to hold even when the distance parameter and the gap parameter are highly permissive, namely for all $d \ge 2$ and any fixed value of $u-\ell$. This shows that the parameters $d$ and $u-\ell$ are essentially useless when considered together with structural parameters: relaxing them does not lead to tractability.

This leads to the question of whether augmenting structural parameters with the upper bound $u$ can overcome this barrier. We answer this positively by designing a fixed-parameter algorithm when parameterized by treewidth and $u$. In addition, we show that while the problem is $\mathrm{W[1]}$-hard when parameterized by the cluster vertex deletion number alone, it becomes fixed-parameter tractable when parameterized by $\mathrm{cvd}+u$. Thus, $u$ plays a central role in recovering tractability once structural restrictions are present. We also show that the problem becomes FPT when parameterized by cluster vertex deletion set and the size of maximum clique when $d\geq 3$.

We then turn to the question of kernelization. While the problem is fixed-parameter tractable when parameterized by vertex cover or by vertex deletion to a clique, we show that it does not admit a polynomial kernel under either parameterization. We next ask whether adding $u$ enable us to get polynomial kernel. We answer this positively for vertex cover by showing that the problem admits a polynomial kernel when parameterized by $\mathrm{vc}+u$. However, this phenomenon does not extend to more general parameters: even when combined with $u$, the problem does not admit a polynomial kernel for parameterizations such as vertex deletion to stars or paths, unless $\mathrm{coNP} \subseteq \mathrm{NP/poly}$. These lower bounds hold even when the distance constraint becomes vacuous, that is, for all $d \ge u$.
Finally, we complement our parameterized results with classical hardness by showing that the problem remains NP-hard even on graphs of bounded clique-width (in particular, at most $4$), even when $d \ge 2$ and $u-\ell$ is bounded by a constant.

Taken together, our results reveal a clear and coherent picture: the parameters $d$ and $u-\ell$ do not contribute to tractability when combined with structural parameters, since hardness persists even in highly relaxed regimes. In contrast, augmenting structural parameters with the upper bound $u$ yields fixed-parameter tractability, and in the case of vertex cover, even polynomial kernels. This precisely delineates the boundary between tractability, hardness, and compressibility for \textsc{Connected Network Microaggregation}. Refer to Figure~\ref{overview} for an overview.

\medskip
\noindent
\textbf{Related Work on Network Microaggregation.}
\textsc{Network Microaggregation} problem is NP-hard in general and has been studied in weighted settings. Recent work initiated a systematic parameterized study of \textsc{NMA} and its connected variant,\textsc{CNMA}, under both structural parameters (such as treewidth and vertex cover) and input parameters (such as the distance bound $d$ and cluster size bound $u$).

From a parameterized perspective, recent work of Blažej et al.~\cite{Blažej_Ganian_Knop_Pokorný_Schierreich_Simonov_2023} has initiated the study of \textsc{NMA} and its connected variant, \textsc{CNMA}, under structural parameters of the input graph. The resulting complexity landscape is rich and nuanced. For \textsc{NMA}, the problem is NP-hard even for constant values of $d$ and $u$, and remains $\mathrm{W[1]}$-hard under several parameterizations involving $d$ or $d+u$, even when combined with structural parameters such as treewidth. On the positive side, fixed-parameter tractability can be achieved when parameterizing by vertex cover together with $d$ or $u$, and in particular when combining $d+u$ with structural parameters. In contrast, \textsc{CNMA} exhibits a more delicate behavior: while it remains $\mathrm{W[1]}$-hard under treewidth-based parameterizations, it becomes fixed-parameter tractable when parameterized by vertex cover together with $d$ or $u$, and in particular admits FPT algorithms when parameterized by $d+u$ combined with structural parameters.

These results highlight that connectivity significantly increases the complexity of the problem under certain parameterizations, while also enabling tractability in others. Overall, they demonstrate that neither structural parameters nor input parameters alone suffice to fully capture the complexity of the problem, and that their interaction plays a crucial role in determining tractability.

\medskip
\noindent
\textbf{Related Graph Partitioning and Clustering Problems.}
A large body of work studies graph partitioning~\cite{GDOWNEY2003209,10.1007/978-3-030-67731-2_23,10.1007/978-3-030-64843-5_6,10.1007/978-3-642-11269-0_10,Demaine2003FixedParameterAF} and clustering problems~\cite{10.1007/978-3-319-08783-2_24,Madathil2024ParameterizedAF,gaikwad2025parameterizedcomplexitysclubcluster,pmlr-v162-ganian22a,Ganian_Kanj_Ordyniak_Szeider_2020} that aim to divide the vertex set under structural or optimization constraints. 
Classical formulations, such as balanced partitioning, seek to partition vertices into $k$ parts while minimizing cut edges, often under size constraints. 

A closely related problem to \textsc{CNMA} is \textsc{Equitable Connected Partition} (\textsc{ECP})~\cite{10.1007/978-3-642-11269-0_10}, where the goal is to partition the vertices into $p$ connected parts whose sizes differ by at most one. 
In contrast, \textsc{CNMA} relaxes this near-equal size requirement by allowing cluster sizes in $[\ell,u]$ and additionally imposes a distance constraint by requiring each cluster to admit a center within distance $d$ of all its vertices. 
From a parameterized perspective, Blažej et al.~\cite{blazej_et_al:LIPIcs.MFCS.2024.29} showed that \textsc{ECP} is NP-hard in general and $\mathrm{W[1]}$-hard under several parameterizations, including pathwidth, feedback vertex set, and the number of parts.
On the positive side, it is fixed-parameter tractable for parameters such as vertex integrity and maximum leaf number, and other structural parameters such as modular width and distance to clique. 
These results are complemented by matching hardness results and algorithmic upper bounds, including XP and FPT algorithms based on structural decompositions and integer programming techniques.

Another closely related problem is \textsc{$k$-Center}, where one selects $k$ centers to minimize the maximum distance of any vertex to its nearest center. The problem is NP-hard~\cite{Vazirani2001} and admits a tight $2$-approximation~\cite{10.1145/5925.5933}. Parameterized hardness results show that it is $\mathrm{W[2]}$-hard~\cite{Demaine2003FixedParameterAF} with respect to $k$ and remains hard under several structural parameters, while FPT approximation schemes are known for parameters such as treewidth and clique-width~\cite{katsikarelis_et_al:LIPIcs.ISAAC.2017.50}.

The \textsc{Network Microaggregation} problem is closely related to the well-studied \textsc{r-Gather} clustering problem, introduced in the context of $k$-anonymity~\cite{10.1145/1798596.1798602}. In \textsc{r-Gather}, the goal is to partition the input into clusters of size at least $r$ while minimizing a distance-based objective. The problem is NP-hard, but admits a $2$-approximation in general metrics~\cite{10.1145/1798596.1798602}.
In the Euclidean setting, approximation guarantees depending on $r$ are known, including an $O(r^3)$-approximation~\cite{DomingoFerrer2008APA} and a $2$-approximation for $r=2$~\cite{10.1007/11930242_12}. Notably, $r$ is typically assumed to be small in applications, as it corresponds to the anonymity requirement.

The key difference to \textsc{Network Microaggregation} is the absence of an upper bound on the cluster size. While in the non-connected setting such an upper bound can often be implicitly enforced, this is no longer the case for the connected variant, where clusters cannot be arbitrarily split without violating connectivity. Consequently, the upper bound $u$ plays a crucial role in \textsc{Connected Network Microaggregation}, and forms a central parameter in our study.

In contrast, \textsc{Connected Network Microaggregation} combines multiple constraints simultaneously: clusters must satisfy size bounds, induce connected subgraphs, and admit a center within bounded distance. 
This combination places our problem at the intersection of graph partitioning, clustering, and domination-type problems, while introducing new algorithmic challenges that are not present in the above formulations.

\begin{figure*}[t]
\centering
\scalebox{0.72}{
\begin{tikzpicture}[>=stealth, every node/.style={font=\small}]

\begin{scope}[xshift=0cm]
\node[font=\bfseries] at (0,5.0) {without $u$};

\node[rectangle, draw, green!60!black, thick, align=center] 
(vc1) at (0,-2) {{\color{green!60!black}vc}$^{\color{black}\times}$};
\node[rectangle, draw, green!60!black, thick, align=center] 
(nd1) at (0.5,1) {$\textcolor{green!60!black}{\mathrm{nd}}^{\!\textcolor{black}{?}}$};
\node[rectangle, draw, black, align=center] (tc1) at (2.25,-0.5) {tc};
\node[rectangle, draw, black, align=center] (mw1) at (0.5,2.5) {mw};
\node[rectangle, draw, red, align=center] (cw1) at (0,4) {cw};
\node[rectangle, draw, orange!85!black, align=center] (pw1) at (-0.9,2) {pw};
\node[rectangle, draw, orange!85!black, align=center] (fvs1) at (-3.25,1) {fvs};
\node[rectangle, draw, orange!85!black, align=center] (tw1) at (-2.5,2.5) {tw};
\node[rectangle, draw, orange!85!black, align=center] (cvd1) at (2.25,2.5) {cvd};
\node[rectangle, draw, orange!85!black, align=center] (td1) at (-1.75,1) {td};

\node[rectangle, draw, orange!85!black, align=center] (vdp1) at (-4,-0.5) {vdp};
\node[rectangle, draw, orange!85!black, align=center] (fes1) at (-5.5,-0.5) {fes};
\node[rectangle, draw, orange!85!black, align=center] (vds1) at (-2.5,-0.5) {vds};
\node[rectangle, draw, black, align=center] (vi1) at (-1,-0.5) {vi};
\node[rectangle, draw, green!60!black, thick, align=center] 
(vdc1) at (1,-0.5) {$\textcolor{green!60!black}{\mathrm{vdc}}^{\textcolor{black}{\times}}$};

\draw[->] (fes1) -- (fvs1);
\draw[->] (vi1) -- (td1);
\draw[->] (vc1) -- (vi1);
\draw[->] (vc1) -- (tc1);
\draw[->] (nd1) -- (mw1);
\draw[->] (tc1) -- (mw1);
\draw[->] (mw1) -- (cw1);
\draw[->] (tc1) -- (cvd1);
\draw[->] (fvs1) -- (tw1);
\draw[->] (pw1) -- (tw1);
\draw[->] (tw1) -- (cw1);
\draw[->] (cvd1) -- (cw1);
\draw[->] (td1) -- (pw1);

\draw[->] (vc1) -- (nd1);
\draw[->] (vc1) -- (vds1);
\draw[->] (vc1) -- (vdp1);
\draw[->] (vdc1) -- (nd1);
\draw[->] (vdc1) -- (cvd1);
\draw[->] (vdp1) -- (fvs1);
\draw[->] (vds1) -- (td1);
\draw[->] (vds1) -- (fvs1);
\end{scope}


\begin{scope}[xshift=10.0cm]
\node[font=\bfseries] at (0,5.0) {with $u$};

\node[rectangle, draw, green!60!black, thick, align=center] 
(vc2) at (0,-2) {{\color{green!60!black}vc}$^{\color{black}\star}$};

\node[rectangle, draw, green!60!black, thick, align=center] 
(nd2) at (0.5,1) {$\textcolor{green!60!black}{\mathrm{nd}}^{\!\textcolor{black}{?}}$};

\node[rectangle, draw, green!60!black, thick, align=center] 
(tc2) at (2.25,-0.5) {$\textcolor{green!60!black}{\mathrm{tc}}^{\!\textcolor{black}{?}}$};

\node[rectangle, draw, black, align=center] (mw2) at (0.5,2.5) {mw};
\node[rectangle, draw, black, align=center] (cw2) at (0,4) {cw};
\node[rectangle, draw, green!60!black, thick, align=center] (pw2) at (-0.9,2) {pw};
\node[rectangle, draw, green!60!black, thick, align=center] (fvs2) at (-3.25,1) {fvs};
\node[rectangle, draw, green!60!black, thick, align=center] (tw2) at (-2.5,2.5) {tw};

\node[rectangle, draw, green!60!black, thick, align=center] 
(cvd2) at (2.25,2.5) {$\textcolor{green!60!black}{\mathrm{cvd}}^{\!\textcolor{black}{?}}$};

\node[rectangle, draw, green!60!black, thick, align=center] (td2) at (-1.75,1) {td};

\node[rectangle, draw, green!60!black, thick, align=center] 
(vdp2) at (-4,-0.5) {$\textcolor{green!60!black}{\mathrm{vdp}}^{\!\textcolor{black}{\times}}$};

\node[rectangle, draw, green!60!black, thick, align=center] 
(vds2) at (-2.5,-0.5) {$\textcolor{green!60!black}{\mathrm{vds}}^{\!\textcolor{black}{\times}}$};

\node[rectangle, draw, green!60!black, thick, align=center] 
(fes2) at (-5.5,-0.5) {$\textcolor{green!60!black}{\mathrm{fes}}^{\!\textcolor{black}{?}}$};

\node[rectangle, draw, green!60!black, thick, align=center] 
(vi2) at (-1,-0.5) {$\textcolor{green!60!black}{\mathrm{vi}}^{\!\textcolor{black}{\times}}$};

\node[rectangle, draw, green!60!black, thick, align=center] 
(vdc2) at (1,-0.5) {$\textcolor{green!60!black}{\mathrm{vdc}}^{\!\textcolor{black}{?}}$};

\draw[->] (fes2) -- (fvs2);
\draw[->] (vi2) -- (td2);
\draw[->] (vc2) -- (vi2);
\draw[->] (vc2) -- (tc2);
\draw[->] (nd2) -- (mw2);
\draw[->] (tc2) -- (mw2);
\draw[->] (mw2) -- (cw2);
\draw[->] (tc2) -- (cvd2);
\draw[->] (fvs2) -- (tw2);
\draw[->] (pw2) -- (tw2);
\draw[->] (tw2) -- (cw2);
\draw[->] (cvd2) -- (cw2);
\draw[->] (td2) -- (pw2);

\draw[->] (vc2) -- (nd2);
\draw[->] (vc2) -- (vds2);
\draw[->] (vc2) -- (vdp2);
\draw[->] (vdc2) -- (nd2);
\draw[->] (vdc2) -- (cvd2);
\draw[->] (vdp2) -- (fvs2);
\draw[->] (vds2) -- (td2);
\draw[->] (vds2) -- (fvs2);
\end{scope}
\begin{scope}[yshift=-4.2cm, xshift=4.8cm]

\node[circle, draw, fill=green!60!black, inner sep=2.5pt] at (-6,0) {};
\node[anchor=west] at (-5.6,0) {FPT};

\node[circle, draw, fill=orange!85!black, inner sep=2.5pt] at (-3.7,0) {};
\node[anchor=west] at (-3.3,0) {$\mathrm{W[1]}$-hard};

\node[circle, draw, fill=red!80!black, inner sep=2.5pt] at (-0.8,0) {};
\node[anchor=west] at (-0.4,0) {NP-hard};

\node[circle, draw, fill=black, inner sep=2.5pt] at (1.4,0) {};
\node[anchor=west] at (1.8,0) {open};

\node at (-6,-0.9) {\textcolor{black}{$\star$}};
\node[anchor=west] at (-5.8,-0.9) {Poly kernel};

\node at (-3.7,-0.9) {\textcolor{black}{$\times$}};
\node[anchor=west] at (-3.3,-0.9) {No poly kernel};

\node at (-0.8,-0.9) {\textcolor{black}{$?$}};
\node[anchor=west] at (-0.4,-0.9) {kernel open};

\end{scope}

\end{tikzpicture}
}
\caption{\footnotesize
Complexity landscape of \textsc{Connected Network Microaggregation} under the considered structural parameterizations. 
The left panel shows the complexity when parameterized by the structural parameter alone, while the right panel shows the complexity when the parameter is combined with the cluster size bound $u$. 
An arrow $A \rightarrow B$ indicates that there exists a function $f$ such that $f(A(G)) \geq B(G)$ for every graph $G$. 
Here, $vc$ denotes vertex cover, $nd$ neighborhood diversity, $tc$ twin-cover, $mw$ modular width, $cvd$ cluster vertex deletion, $fvs$ feedback vertex set, $fes$ feedback edge set, $vi$ vertex integrity, $pw$ pathwidth, $tw$ treewidth, $td$ treedepth, $cw$ clique-width, $vdp$ vertex deletion to paths, $vds$ vertex deletion to stars, and $vdc$ vertex deletion to a clique. 
Green indicates fixed-parameter tractability, orange indicates W[1]-hardness, red indicates NP-hardness, and black denotes cases whose status remains open.}
\label{overview}
\end{figure*}

\section{Preliminaries}

We consider simple, undirected graphs $G = (V,E)$ with $n := |V|$ vertices.
For a vertex $v \in V$, we denote its (open) neighborhood by $N(v)$.
For a subset $S \subseteq V$, let $G[S]$ denote the subgraph induced by $S$.
For vertices $u,v \in V$, let $\dist_G(u,v)$ denote the length of a shortest path between $u$ and $v$ in $G$, and $\dist_G(u,v) = \infty$ if no such path exists.
A \emph{clustering} of $V$ is a partition $\Pi = (C_1,\dots,C_m)$ of $V$.
Each set $C_i$ is called a \emph{cluster}, and we associate with each cluster a vertex $c_i \in V$, called its \emph{center}.
In this work, we focus on the \emph{connected} variant, where each cluster $C_i$ must additionally induce a connected subgraph of $G$.

\noindent \textbf{Structural Parameters.}
We now introduce the structural graph parameters used throughout the paper.

\begin{defn}\label{defvc}
A set $S \subseteq V(G)$ is a \emph{vertex cover} of $G$ if every edge in $E(G)$ has at least one endpoint in $S$. The size of a smallest vertex cover of $G$ is called the \emph{vertex cover number}.
\end{defn}

\begin{defn}\label{deffvs}
A \emph{feedback vertex set} of a graph $G$ is a set of vertices whose removal results in a forest. The minimum size of such a set is called the \emph{feedback vertex set number}.
\end{defn}

A rooted forest is a disjoint union of rooted trees. Given a rooted forest $Y$, its \emph{closure} is the graph $H$ with $V(H)=V(Y)$, where two distinct vertices are adjacent if and only if one is an ancestor of the other in $Y$.

\begin{defn}\label{deftd}~\cite{bib12}
The \emph{treedepth} of a graph $G$ is the minimum height of a rooted forest $Y$ whose closure contains $G$ as a subgraph. It is denoted by $td(G)$.
\end{defn}

We also consider the notion of twin-cover, introduced by Ganian~\cite{bib13}.

\begin{defn}~\cite{bib13}
An edge $uv \in E(G)$ is a \emph{twin edge} if $N[u]=N[v]$.
\end{defn}

\begin{defn}~\cite{bib13}\label{deftc}
A set $X \subseteq V(G)$ is a \emph{twin-cover} if every edge of $G$ is either a twin edge or incident to a vertex in $X$. The minimum size of such a set is called the \emph{twin-cover number}, denoted by $tc(G)$.
\end{defn}

Two distinct vertices $u,v$ are called \emph{true twins} if $N[u]=N[v]$ and \emph{false twins} if $N(u)=N(v)$. We say that two vertices have the same \emph{neighborhood type} if they are either true or false twins; such vertices are simply called \emph{twins}.

\begin{defn}~\cite{Lampis}\label{defnd}
A graph $G$ has \emph{neighborhood diversity} at most $d$ if $V(G)$ can be partitioned into at most $d$ sets (called \emph{type classes}) such that all vertices in each set have the same neighborhood type.
\end{defn}

We now recall the notion of tree decompositions introduced by Robertson and Seymour~\cite{Neil}.

\begin{defn}~\cite{Neil}
A \emph{tree decomposition} of a graph $G=(V,E)$ is a tree $T$ together with a family of subsets $(X_t)_{t \in V(T)}$ of $V$ (called \emph{bags}) such that $\bigcup_{t \in V(T)} X_t = V$ and the following conditions hold:
(1) for every edge $uv \in E(G)$, there exists $t \in V(T)$ with $\{u,v\} \subseteq X_t$, and 
(2) for every $v \in V$, the set of nodes $\{t \in V(T) : v \in X_t\}$ induces a connected subtree of $T$.
\end{defn}

\begin{defn}~\cite{Neil}\label{deftw}
The \emph{width} of a tree decomposition is $\max_{t \in V(T)} |X_t| - 1$. The \emph{treewidth} of $G$, denoted $tw(G)$, is the minimum width over all tree decompositions of $G$.
\end{defn}

\begin{defn}\label{defpw}
If the underlying tree $T$ of a tree decomposition is a path, then it is called a \emph{path decomposition}. The \emph{pathwidth} $pw(G)$ is the minimum width over all path decompositions of $G$.
\end{defn}

\begin{defn}\label{cvd def}
The \emph{cluster vertex deletion number} of a graph $G$ is the minimum number of vertices whose removal results in a disjoint union of complete graphs.
\end{defn}

\begin{defn}
    The clique-width of a graph $G$, denoted by ${\tt cw}(G)$, is the minimum number of labels needed 
to construct $G$ 
using the following four operations:
\begin{enumerate}
    \item Create a new graph with a single vertex $v$ with label $i$ (written $i(v)$).
    \item Take the disjoint union of two labelled graphs $G_1$ and $G_2$ (written $G_1 \oplus G_2$).
    \item Add an edge between every vertex with label $i$ and every vertex with label~$j$,
$i\neq j$ (written $\eta_{i,j}$).
\item Relabel every vertex with label $i$ to have label $j$ (written $\rho_{i\rightarrow j}$).
\end{enumerate}
We say that a construction of a graph $G$ with the four operations is a $c$-expression if it 
uses at most $c$ labels. Thus the clique-width of $G$ is the minimum $c$ for which $G$ has a 
$c$-expression.  
A $c$-expression is a rooted binary tree $T$ such that
\begin{enumerate}
    \item each leaf has label $i$ for some $i\in \{1,\ldots,c\}$,
    \item each non-leaf node with two children has label $\oplus $, and 
    \item each non-leaf node with only one child has label $\rho_{i\rightarrow j}$ 
    or $\eta_{i,j}$ $(i,j\in \{1,\ldots,c\}, i\neq j)$.
\end{enumerate} 

\end{defn}

\begin{definition}[Barefoot, Entringer and  Swart.\cite{Barefoot}]\label{defvi} The \emph{vertex integrity} of a graph~$G$, denoted ${\tt vi}(G)$, is the minimum integer $k$ satisfying that there is  $X \subseteq V (G)$ such that~$|X|+|V(C)| \leq  k$ for each component 
$C$ of  $G-X$.
\end{definition}

\medskip
\noindent
We now formally define the problem studied in this paper.

\[
\fbox{
\begin{minipage}{0.92\linewidth}
\textbf{Connected Network Microaggregation (CNMA)} \\[4pt]
\textbf{Input:} 
An undirected graph $G=(V,E)$, and integers $\ell,u,d \in \mathbb{N}$ with $1 \le \ell \le u$. \\[4pt]
\textbf{Question:} 
Does there exist an integer $m$, a partition $\Pi = (C_1,\dots,C_m)$ of $V$, 
and vertices $c_1,\dots,c_m \in V$ such that for every $i \in [m]$:
\begin{itemize}
    \item $\ell \le |C_i| \le u$,
    \item $G[C_i]$ is connected, and
    \item $\dist_G(v,c_i) \le d$ for all $v \in C_i$?
\end{itemize}
Here, the centers $c_i$ are not required to belong to the corresponding clusters $C_i$, and a single vertex may serve as the center for multiple clusters.
\end{minipage}
}
\]

\section{Fixed Parameter Tractability and Kernelization}

\begin{theorem}
\textsc{Connected Network Microaggregation} is fixed-parameter tractable when parameterized by the neighborhood diversity of the input graph.
\end{theorem}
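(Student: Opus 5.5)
We reduce to an integer linear program with few variables and use Lenstra's algorithm (in the improved form of Kannan/Frank--Tardos). Let $V_1,\dots,V_k$ be the type classes, with $k=\mathrm{nd}(G)$; they can be computed in polynomial time. Let $H$ be the type graph on $[k]$, in which $i$ and $j$ are adjacent if $V_i$ and $V_j$ are completely joined; we also record whether each class is a clique or an independent set. We may assume $G$ is connected, since otherwise we handle each component separately; each component is a union of type classes. In a connected graph of neighborhood diversity $k$ the diameter is at most $k$. Hence if $d\ge k$ the distance constraint is vacuous. Otherwise distances between vertices are determined by their types: the distance between two vertices of the same class is $1$ or $2$, and otherwise it equals the distance in $H$. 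Thus whether a type $t$ can serve as a center for a vertex of type $i$ depends only on $(t,i)$.

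The key step is to describe clusters by \emph{profiles}. For a cluster $C$, let its \emph{support} be $S(C)=\{i : C\cap V_i\neq\emptyset\}$. Connectivity of $G[C]$ is determined by two pieces of data: the support $S$ and, for each independent class in $S$, whether $C$ uses exactly one vertex of it or at least two. Concretely, $G[C]$ is connected iff $H[S]$ is connected, where an independent class with $|C\cap V_i|\ge2$ and no neighbour in $S$ is disconnected; clique classes and singleton uses are always fine. The only exception is $|S|=1$, where we need either a clique class or a single vertex. A cluster with a given support admits a center of type $t$ iff every $i\in S$ is within distance $d$ of $t$, where the same-type case is treated via the clique or independent distinction and the requirement that $|V_t|\ge 2$ or the vertex is itself. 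It suffices to existentially guess some $t$, since centers may be shared and lie outside the cluster. So the feasibility of a cluster depends only on the \emph{configuration} $(S,\sigma)$, where $\sigma$ marks which classes are used with multiplicity one versus at least two, together with the sizes $|C\cap V_i|$, subject to the constraint $\ell\le\sum_i|C\cap V_i|\le u$. There are at most $3^k$ configurations.

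We now set up the ILP. For each valid configuration $\tau$ we introduce a variable $x_\tau$ counting the clusters of that configuration, and variables $y_{\tau,i}$ giving the total number of type-$i$ vertices placed in clusters of configuration $\tau$. The constraints are as follows. First, $\sum_\tau y_{\tau,i}=|V_i|$ for all $i$. Second, for each $\tau$ and $i$: if $i\notin S$ then $y_{\tau,i}=0$; if $i$ is marked single then $y_{\tau,i}=x_\tau$; if $i$ is marked multiple then $y_{\tau,i}\ge 2x_\tau$. Third, $\ell x_\tau\le\sum_i y_{\tau,i}\le u x_\tau$.

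The main obstacle is showing that these aggregate constraints suffice, that is, that totals satisfying them can be split into $x_\tau$ individual clusters each of size in $[\ell,u]$ with the required multiplicities. We argue this greedily. Start each cluster at its minimum, namely one vertex per single class and two per multiple class; the lower bound $\ell x_\tau$ together with $\sum_i y_{\tau,i}\le u x_\tau$ guarantees the baseline is at most $u$ per cluster. Then distribute the surplus vertices of each multiple class round-robin, filling clusters up to $u$ and then up to balance, as an integer "water-filling" argument. Since the surplus is only in multiple classes, adding vertices never breaks connectivity or the center condition. Balanced distribution gives all sizes within one of the average, which lies in $[\ell,u]$. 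Conversely, any solution yields a feasible ILP assignment. The number of variables is $O(k\cdot 3^k)$, so Lenstra's algorithm runs in FPT time in $k$.
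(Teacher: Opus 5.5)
Your proof is correct. Its overall strategy matches the paper's: describe each cluster by its support in the type graph $H$, introduce a count variable and per-type total variables for every admissible profile, and solve the resulting ILP with $2^{O(k)}$ variables by Lenstra's algorithm.

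The genuine difference is in the profile, and your version is the sound one. The paper indexes clusters only by pairs $(S,c)$. It requires $H[S]$ to be connected and every type of $S$ to be within $H$-distance $d$ of $c$. It then rebuilds clusters by giving each one at least one vertex of every type in $S$. Your extra absent/single/multiple marking $\sigma$, with the center type chosen existentially, is exactly what that reconstruction is missing:
\begin{itemize}
\item The paper's claim that every cluster with connected $H[S]$ is connected fails for $S=\{i\}$ with $V_i$ independent. For $K_{1,3}$ with $\ell=u=2$, the paper's ILP is feasible via profiles $(\{1,2\},\cdot)$ and $(\{2\},\cdot)$, yet the instance is a no-instance.
\item Treating a vertex of the center's own type as being at distance $0$ fails when an independent class is used at least twice and $d=1$. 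For example, take $C_4$ with $d=1$ and $\ell=u=4$.
\end{itemize}
Your baseline-then-balance splitting also correctly handles the ``at least two'' requirement, which the paper's one-vertex-per-type splitting does not need to address.

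A few small slips, none of which affect correctness:
\begin{itemize}
\item A connected component need not be a union of type classes; an independent class of isolated vertices splits into singletons. The reduction to components is unnecessary anyway, since the ILP works globally.
\item The bound $b\le u$ on the baseline follows from $\sum_i y_{\tau,i}\ge b\,x_\tau$ together with the upper bound $u\,x_\tau$, not from the lower bound $\ell x_\tau$.
\item The phrase about $|V_t|\ge 2$ should be replaced by the precise rule for $t\in S$. If $t$ is single, take the cluster's own vertex of $V_t$ as the center. If $t$ is multiple, require $d\ge 1$ for a clique class and $d\ge 2$ for an independent class. In the independent case $S$ necessarily contains a neighbouring class, so the same-class distance is exactly $2$.
\end{itemize}
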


\begin{proof}
Let $(G,d,\ell,u)$ be an instance of \textsc{Connected Network Microaggregation}, and let $r$ be the neighborhood diversity of $G$. We compute in polynomial time a partition $V_1,\dots,V_r$ of $V(G)$ into neighborhood types. Recall that each type induces either a clique or an independent set, and between any two types either all edges are present or none are.

Let $n_i := |V_i|$ for each $i\in [r]$. Define the type graph $H$ with vertex set $[r]$, where $ij \in E(H)$ if and only if every vertex of $V_i$ is adjacent to every vertex of $V_j$ in $G$.
For a cluster $R$, define its support
$S(R) := \{\, i \in [r] \mid R \cap V_i \neq \emptyset \,\}$.
We say that a pair $(S,c)$ with $S \subseteq [r]$ and $c \in [r]$ is \emph{feasible} if the subgraph of $H$ induced by $S$ is connected, and every type in $S$ is at distance at most $d$ from $c$.
Since $r$ is the parameter, all such feasible pairs can be enumerated in polynomial time. Let $\mathcal{F}$ denote the set of all feasible pairs.

\medskip
\noindent
We now construct an integer linear program. For each $(S,c)\in \mathcal{F}$, introduce:
\begin{itemize}
    \item a variable $y_{S,c}$ denoting the number of clusters with support $S$ and center type $c$,
    \item for each $i\in S$, a variable $z_{S,c,i}$ denoting the total number of vertices of type $i$ assigned to such clusters.
\end{itemize}

We restrict our attention to feasible solutions of the ILP.

\[
\fbox{
\begin{minipage}{0.88\linewidth}
\[
\begin{array}{rl}
\text{Subject to} & \\[6pt]

(1) & \displaystyle \sum_{\substack{(S,c)\in \mathcal{F} \\ i\in S}} z_{S,c,i} = n_i
\qquad \text{for all } i \in [r], \\[12pt]

(2) & z_{S,c,i} \ge y_{S,c}
\qquad \text{for all } (S,c)\in \mathcal{F},\ i\in S, \\[12pt]

(3) & \ell \cdot y_{S,c} \le \displaystyle\sum_{i\in S} z_{S,c,i} \le u \cdot y_{S,c}
\qquad \text{for all } (S,c)\in \mathcal{F}, \\[12pt]

(4) & z_{S,c,i},\ y_{S,c} \in \mathbb{Z}_{\ge 0}.
\end{array}
\]
\end{minipage}
}
\]

\medskip
\noindent
\textbf{Explanation of the constraints.}
Constraint~(1) ensures that all vertices are covered: for every type $i \in [r]$, the total number of vertices assigned across all clusters equals $n_i$.
Constraint~(2) enforces consistency between cluster selection and assignments: if a cluster of type $(S,c)$ is selected (i.e., $y_{S,c} \ge 1$), then each type $i \in S$ must contribute at least one vertex to that cluster.
Constraint~(3) guarantees that every selected cluster has a valid size, namely between $\ell$ and $u$.
Finally, Constraint~(4) enforces integrality and nonnegativity of all variables.

\medskip
\noindent
\textbf{Correctness.}
Suppose first that a feasible clustering exists. For each cluster $R$, let $S=S(R)$ and let $c$ be any valid center. Then $(S,c)\in \mathcal{F}$. Set $y_{S,c}$ to be the number of such clusters, and $z_{S,c,i}$ the number of vertices of type $i$ in them. All constraints are satisfied.

Conversely, suppose the ILP has a feasible solution. Fix $(S,c)\in \mathcal{F}$ and let $x := y_{S,c}$ and $M_i := z_{S,c,i}$ for each $i\in S$. Let
$M := \sum_{i\in S} M_i$.
Since $M_i \ge x$, assign one vertex of type $i$ to each of the $x$ clusters. Each cluster now contains all types in $S$ and has size $|S|$.
Since $\ell x \le M \le ux$, there exist integers $m_1,\dots,m_x$ such that
\[
\ell \le m_j \le u \quad \text{for all } j\in [x],
\qquad
\sum_{j=1}^x m_j = M.
\]
This follows by writing $M = \ell x + t$ with $0 \le t \le (u-\ell)x$ and distributing $t$ units among the $x$ clusters.
Distribute the remaining vertices arbitrarily so that cluster $j$ reaches size exactly $m_j$.
Each resulting cluster is connected since $S$ induces a connected subgraph of the type graph and each cluster contains at least one vertex from each type in $S$. Moreover, since $(S,c)\in \mathcal{F}$, there exists a vertex of type $c$ at distance at most $d$ from all vertices of types in $S$, hence every cluster satisfies the distance constraint.
Applying this construction for all $(S,c)\in \mathcal{F}$ yields a valid partition of $V(G)$.

\medskip
\noindent
\textbf{Complexity.}
There are at most $r \cdot 2^r$ feasible pairs $(S,c)$. For each such pair, we introduce one variable $y_{S,c}$ and at most $r$ variables $z_{S,c,i}$. Thus the total number of variables is at most $(r+1)2^r$. By Lenstra's theorem, the ILP can be solved in time depending only on $r$. Hence the problem is fixed-parameter tractable parameterized by neighborhood diversity.
\end{proof}

\begin{corollary}
    \textsc{Connected Network Microaggregation} is FPT when parameterized by vertex deletion to clique.
\end{corollary}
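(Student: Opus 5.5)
The plan is to derive the corollary directly from the preceding theorem, which shows that \textsc{Connected Network Microaggregation} is fixed-parameter tractable when parameterized by neighborhood diversity. It suffices to show that the neighborhood diversity of $G$ is bounded by a function of $k$, where $k$ is the vertex deletion distance to a clique. Concretely, we aim for the bound $\mathrm{nd}(G) \le 2^k + k$.

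Let $X \subseteq V(G)$ with $|X| = k$ be such that $G - X$ is a clique. We build the partition as follows.
\begin{itemize}
\item Each vertex of $X$ forms its own singleton class.
\item The vertices of the clique $K := V(G)\setminus X$ are grouped according to their neighborhood in $X$, that is, $v$ and $w$ lie in the same class if and only if $N(v)\cap X = N(w)\cap X$.
\end{itemize}
This yields at most $k + 2^k$ classes. Each singleton class trivially consists of vertices of the same type. For two vertices $v,w \in K$ in the same class, we have $N[v] = K \cup (N(v)\cap X) = N[w]$, since both are adjacent to all of $K$ and they agree on $X$. Hence $v$ and $w$ are true twins, and the partition is a valid neighborhood-type partition in the sense of Definition~\ref{defnd}.

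Next we handle the algorithmic details. A minimum set $X$ can be found in FPT time in $k$, for instance by branching on a non-edge: at least one of its two endpoints must be deleted, giving a $2^k\cdot n^{O(1)}$ algorithm. In any case, this is unnecessary, because the algorithm of the preceding theorem computes the neighborhood-type partition itself in polynomial time. Its running time then depends only on $\mathrm{nd}(G) \le 2^k + k$, which gives FPT in $k$.

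There is no real obstacle here. The only point to check is that the true-twin condition holds inside $K$, which relies on $K$ being a clique, so that closed neighborhoods coincide.
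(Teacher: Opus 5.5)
Your proposal is correct and matches the paper's intended route: the paper states this corollary right after the neighborhood-diversity theorem without a separate proof, relying on the fact (shown by the arrow from vdc to nd in Figure~\ref{overview}) that neighborhood diversity is bounded by a function of the distance to a clique. Your explicit bound $\mathrm{nd}(G)\le 2^k+k$ supplies exactly that step. It holds because vertices of the clique $G-X$ with equal neighborhoods in $X$ are true twins, and the vertices of $X$ can be taken as singleton classes.
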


\begin{theorem}
\textsc{Connected Network Microaggregation} admits a polynomial kernel of size $\mathrm{vc} \cdot u$ when parameterized by $\mathrm{vc}+u$.
\end{theorem}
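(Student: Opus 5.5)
We work with a vertex cover $X$ with $|X| = k$. A 2-approximation suffices, so we may assume $X$ is computed in polynomial time. The plan has three steps: bound the number of clusters that touch $X$, handle clusters lying entirely in $I = V \setminus X$, and then bound $|I|$.

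\textbf{Clusters inside $I$.} Since $I$ is independent, any connected cluster with no vertex of $X$ has size $1$. Such clusters exist only when $\ell = 1$. In that case every vertex can form its own singleton cluster, with itself as center at distance $0 \le d$, so the instance is trivially a yes-instance. We therefore assume $\ell \ge 2$ and output a trivial yes-instance otherwise.

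\textbf{Counting clusters.} With $\ell \ge 2$, every cluster contains a vertex of $X$. Hence there are at most $k$ clusters, and they cover at most $ku$ vertices. So if $n > ku$ we may output a trivial no-instance of constant size. Otherwise $n \le ku = \mathrm{vc} \cdot u$ and the instance is already a kernel of the claimed size.

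\textbf{Main point to verify.} The step needing care is the case $\ell = 1$. I must check that the centers $c_i$ may coincide with cluster members, which the problem definition permits, since it only requires $c_i \in V$. With that, the singleton partition is valid regardless of $d$ and $u \ge 1$. Everything else is immediate, and the running time is polynomial.
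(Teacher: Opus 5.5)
Your proof is correct and follows the paper's argument exactly: when $\ell=1$ the singleton partition is a solution, and when $\ell\ge 2$ every cluster meets the vertex cover, so yes-instances have at most $\mathrm{vc}\cdot u$ vertices and larger instances are rejected. One small slip in the size bound: if $X$ comes from a $2$-approximation, then $ku$ can be as large as $2\,\mathrm{vc}\cdot u$. To get exactly $\mathrm{vc}\cdot u$, apply the counting to a minimum vertex cover (it bounds the number of clusters by $\mathrm{vc}(G)$ whichever cover you computed) and compare $n$ against the given parameter value; otherwise simply accept the harmless factor $2$.
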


\begin{proof}
Let $(G,d,\ell,u)$ be an instance of \textsc{Connected Network Microaggregation}, and let $X$ be a vertex cover of $G$ of size $k:=|X|$.

If $\ell=1$, then the instance is trivially a yes-instance: indeed, we may partition $V(G)$ into singleton clusters, each of which is connected and has size in $[\ell,u]$, and every singleton admits its unique vertex as a center. Hence we may assume from now on that $\ell>1$.

Under this assumption, every valid cluster has size at least $2$. Since $V(G)\setminus X$ is an independent set, no connected induced subgraph of size at least $2$ can be contained entirely in $V(G)\setminus X$. Therefore every cluster in any feasible solution must contain at least one vertex of the vertex cover $X$.

This immediately implies that any feasible solution consists of at most $k$ clusters, because distinct clusters are vertex-disjoint and each contains at least one vertex of $X$. As every cluster has size at most $u$, the total number of vertices in a yes-instance is at most $ku$.
We therefore apply the following reduction rule.

\begin{cma}
If $|V(G)|>ku$, return a fixed no-instance.
\end{cma}

\noindent
The rule is safe. Indeed, if $|V(G)|>ku$, then no partition of $V(G)$ into at most $k$ clusters of size at most $u$ exists, and hence the instance cannot be a yes-instance.
If the rule is not applicable, then the graph already has at most $ku$ vertices. Thus the instance itself is a kernel of size at most $ku$.
Since the rule can clearly be checked and applied in polynomial time, this yields a polynomial kernel for \textsc{Connected Network Microaggregation} parameterized by $\mathrm{vc}+u$.
\end{proof}

\begin{theorem}\label{thm:connected-dp-tw-u}
\textsc{Connected Network Microaggregation} can be solved in time $(u(w+1)^2)^{2(w+1)} \cdot n^{O(1)}$, where $w$ is the treewidth of the input graph, provided that $d>u$.
\end{theorem}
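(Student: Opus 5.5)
The key observation is that when $d>u$ the distance constraint is vacuous. Every cluster $C$ induces a connected subgraph on at most $u$ vertices, so its diameter in $G[C]$, and hence in $G$, is at most $u-1<d$. Thus any vertex of $C$ serves as a center. The problem then reduces to deciding whether $V(G)$ can be partitioned into sets $C$ with $\ell\le |C|\le u$ and $G[C]$ connected. We solve this by dynamic programming over a nice tree decomposition of width $w$. Such a decomposition can be computed (or a constant-factor approximation used, which only changes the constant in the exponent) in time $2^{O(w)}n$ or $f(w)n$, with $O(wn)$ nodes: introduce, forget, join, and leaf nodes.

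For a node $t$ with bag $X_t$ and subtree vertex set $V_t$, a partial solution is a partition of $V_t$ into parts. Parts disjoint from $X_t$ are completed clusters, with size in $[\ell,u]$ and connected. Parts meeting $X_t$ are unfinished: their size is at most $u$, but they need not yet be connected. The state records three pieces of information.
\begin{enumerate}
\item A partition $\mathcal{P}$ of $X_t$ into groups. Two bag vertices are in the same group iff they lie in the same future cluster.
\item For each group, a finer partition recording which of its bag vertices are already connected inside the partial cluster restricted to $V_t$. Every connected component of a partial cluster must touch $X_t$, since otherwise it could never be connected later.
\item For each group, its current size in $\{1,\dots,u\}$.
\end{enumerate}
The number of states per bag is bounded roughly by $(w+1)^{w+1}$ for each partition level and $u^{w+1}$ for the sizes. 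This gives at most $(u(w+1)^2)^{w+1}$ states, and join nodes combine pairs, giving the claimed $(u(w+1)^2)^{2(w+1)}$ factor. We store a boolean table $D_t[\text{state}]$.

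The transitions are as follows. At an introduce node for $v$, $v$ either forms a new singleton group of size $1$ or joins an existing group. In the latter case the size increases by one and must remain at most $u$, and the connectivity partition is updated by merging $v$ with the components containing its bag neighbours in that group. Edges to vertices outside the bag are irrelevant, since all neighbours of $v$ in $V_t$ lie in $X_t$ at introduction. At a forget node for $v$, if $v$'s group still has other bag vertices, $v$ must be in the same connectivity component as at least one of them, and we drop $v$. If $v$ is the last vertex of its group, the group must be fully connected (a single component) and have size at least $\ell$; the cluster is then closed. At a join node the two children must agree on the group partition $\mathcal{P}$. Sizes add, minus the number of shared bag vertices of each group, and must stay at most $u$. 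The connectivity partitions are merged by taking the transitive join. The root, with an empty bag, is accepting iff its empty state is reachable.

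Correctness follows by the standard induction: every valid clustering restricts to a consistent state at each node, and conversely reachable states glue to valid clusterings. The main subtle point is the connectivity bookkeeping. We must ensure that a component of a partial cluster never becomes disconnected from the bag before the cluster is complete, and that the join merge correctly computes connectivity through both subtrees. Both are handled by the forget-node checks and the union of the two partitions, since paths between bag vertices in $V_t$ decompose into segments lying in one child's subgraph. The running time is the number of nodes times the per-node work, which is dominated by join nodes at $(u(w+1)^2)^{2(w+1)}\cdot n^{O(1)}$.
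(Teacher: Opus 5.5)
Your proposal is correct and follows the paper's proof essentially step for step: you use the same observation that $d>u$ makes the distance constraint vacuous, the same nice-tree-decomposition DP (states are a bag partition into partial clusters, sizes in $\{1,\dots,u\}$, and a connectivity partition per block), the same state count, and the same join-dominated running time. One caveat applies equally to the paper: the exact stated bound needs a width-$w$ decomposition to be given, since neither computing an optimal one nor using a constant-factor approximation stays within $(u(w+1)^2)^{2(w+1)}\cdot n^{O(1)}$. Your forget-node rule, which requires a forgotten vertex to share a connectivity class with a remaining bag vertex of its group, is in fact necessary and is missing from the paper's version. Without it the paper's DP accepts disconnected clusters, for example two non-adjacent vertices of a common bag placed in one block with $\ell=u=2$, so on this point your write-up is the more careful one.
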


\begin{proof}
Observe that if $d>u$, then every connected graph on at most $u$ vertices has diameter at most $u-1<d$. Hence, the distance constraint is automatically satisfied, and it suffices to partition $V(G)$ into connected components of size between $\ell$ and $u$.

Let $(T,\{X_t\}_{t\in V(T)})$ be a nice tree decomposition of width $w$. For a node $t$, let $G_t$ denote the subgraph induced by vertices appearing in the subtree rooted at $t$.
We define a dynamic program. For each node $t$, we store entries $\mathrm{DP}[t,\mathcal S]\in\{\textsf{true},\textsf{false}\}$, where a state $\mathcal S$ consists of:
\begin{itemize}
    \item a partition $\Pi$ of $X_t$;
    \item a function $\sigma:\Pi\to \{1,\dots,u\}$, where $\sigma(B)$ denotes the size of the partial cluster corresponding to block $B$;
    \item for each block $B\in\Pi$, a partition $\Gamma_B$ of $B$ describing connectivity among vertices of $B$ inside $G_t$.
\end{itemize}

The interpretation is that each block $B\in\Pi$ corresponds to one partial cluster intersecting the bag, with size $\sigma(B)$, and $\Gamma_B$ records how vertices of $B$ are connected within $G_t$.

\medskip
\noindent
\textbf{Leaf node.} For a leaf node with empty bag, we set
\[
\mathrm{DP}[t,(\emptyset,\emptyset,\emptyset)] = \textsf{true}.
\]

\medskip
\noindent
\textbf{Introduce node.} Suppose $t$ introduces a vertex $v$, and let $t'$ be its child. For each state $\mathcal S'$ at $t'$, we construct states at $t$ as follows:

\begin{itemize}
    \item $v$ starts a new block $\{v\}$ with $\sigma(\{v\})=1$ and $\Gamma_{\{v\}}=\{\{v\}\}$;
    \item or $v$ joins an existing block $B\in\Pi'$, provided $\sigma'(B)<u$. In this case, we update $\sigma(B)=\sigma'(B)+1$ and update $\Gamma_B$ by merging all components containing neighbors of $v$ in $B$, or placing $v$ as a singleton if it has no neighbors in $B$.
\end{itemize}

\medskip
\noindent
\textbf{Forget node.} Suppose $t$ forgets a vertex $v$, and let $t'$ be its child. Let $B$ be the block containing $v$.

\begin{itemize}
    \item If $B=\{v\}$, then this cluster is closed, and we require $\sigma(B)\in[\ell,u]$ and $\Gamma_B=\{\{v\}\}$; otherwise the transition is invalid.
    \item If $|B|\ge 2$, we remove $v$ from $B$ and restrict $\Gamma_B$ accordingly, keeping $\sigma(B)$ unchanged.
\end{itemize}

\medskip
\noindent
\textbf{Join node.} Suppose $t$ has children $t_1,t_2$ with $X_t=X_{t_1}=X_{t_2}$. We combine states $\mathcal S_1,\mathcal S_2$ with identical partitions $\Pi$.
For each block $B\in\Pi$, we set
$\sigma(B) = \sigma_1(B) + \sigma_2(B) - |B|$,
and require $\sigma(B)\le u$.
The connectivity partition $\Gamma_B$ is obtained as the transitive closure of the union of $\Gamma_B^1$ and $\Gamma_B^2$.

\medskip
\noindent
\textbf{Root.} At the root with empty bag, we accept if
\[
\mathrm{DP}[r,(\emptyset,\emptyset,\emptyset)] = \textsf{true}.
\]

\medskip
\noindent
\textbf{Complexity.} Let $b=w+1$. The number of partitions of a bag $X_t$ is at most $\mathrm{Bell}(b) \le b^b$. For each partition $\Pi$, assigning sizes contributes at most $u^{|\Pi|} \le u^b$ possibilities. For connectivity, for each block $B \in \Pi$, the number of partitions $\Gamma_B$ is at most $\mathrm{Bell}(|B|) \le b^{|B|}$, and hence over all blocks this contributes at most $b^b$. Thus, the total number of states per bag is at most $u^b \cdot b^{2b} = u^{w+1}(w+1)^{2(w+1)}$.
Each introduce and forget operation can be processed in time polynomial in $b$, while a join node requires combining pairs of states, leading to $O(N_b^2 \cdot b^{O(1)})$ time per node, where $N_b = u^{w+1}(w+1)^{2(w+1)}$. Since a nice tree decomposition has $O(wn)$ nodes, the total running time is $(u(w+1)^2)^{2(w+1)} \cdot n^{O(1)}$.
\end{proof}

\begin{corollary}
\textsc{Connected Network Microaggregation} is fixed-parameter tractable when parameterized by $\mathrm{tw}+u$.
\end{corollary}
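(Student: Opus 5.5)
The plan is to split on how $d$ compares with $u$, and to treat the two regimes separately.

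\textbf{Case $d>u$.} Here Theorem~\ref{thm:connected-dp-tw-u} applies verbatim. The only preliminary step is computing a tree decomposition of width $O(w)$ in FPT time, for instance with Bodlaender's algorithm or a constant-factor approximation. The running time $(u(w+1)^2)^{2(w+1)}\cdot n^{O(1)}$ is then of the form $f(\mathrm{tw},u)\cdot n^{O(1)}$.

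\textbf{Case $d\le u$.} Now the distance bound is itself bounded by the parameter, so it suffices to give an FPT algorithm for $\mathrm{tw}+u+d$. I would first try to invoke the result of Blažej et al.~\cite{Blažej_Ganian_Knop_Pokorný_Schierreich_Simonov_2023}, which gives fixed-parameter tractability of \textsc{CNMA} parameterized by $d+u$ combined with structural parameters such as treewidth. Since $d\le u$, the combined parameter $\mathrm{tw}+d+u$ is at most $\mathrm{tw}+2u$, so this closes the case.

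If a self-contained argument is preferred, I would instead use Courcelle's theorem. The existence of a valid clustering is expressible in MSO, with formula length depending only on $u$ and $d$, as follows:
\begin{itemize}
\item Quantify a set $F$ of edges whose connected components (within $G[F]$) form the clusters.
\item Require each component to have between $\ell$ and $u$ vertices. Since $\ell\le u$, these counts are expressible by formulas of length depending on $u$, using at most $u+1$ distinct vertices.
\item Require that for every component there is a vertex $c$ with $\dist_G(c,x)\le d$ for all $x$ in the component. Because $d\le u$, distance at most $d$ is a first-order formula of length $O(u)$.
\end{itemize}
One point needs care: $G[C_i]$ must be connected, but $G[F]$ may use fewer edges. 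This is harmless, because requiring the clusters to be the components of some edge subset $F$ is equivalent to requiring the blocks to induce connected subgraphs: take $F$ to be the edges inside the blocks.

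The main obstacle is this second case. It is the step where the distance constraint is genuinely active, and Theorem~\ref{thm:connected-dp-tw-u} does not cover it because its DP does not track centers. If neither the citation nor MSO were acceptable, I would extend the DP state. For each partial cluster $B$ I would store the set of candidate center positions relative to the bag, recorded as distance profiles to the bag vertices truncated at $d\le u$. Such profiles number at most $(u+2)^{w+1}$, which keeps the state count bounded by a function of $w$ and $u$.
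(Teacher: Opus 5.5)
Your proof is correct and matches the paper's: split on $d\le u$ versus $d>u$, use the $\mathrm{tw}+d+u$ result of Bla\v{z}ej et al.\ in the first case (where $d\le u$ keeps the combined parameter bounded by $\mathrm{tw}+2u$), and use Theorem~\ref{thm:connected-dp-tw-u} in the second. Your Courcelle-based alternative is also a valid self-contained route, but your distance-profile DP fallback would need more care, since distances in $G$ can run through vertices outside $G_t$ and so cannot be determined from the processed part alone.
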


\begin{proof}
Let $(G,d,\ell,u)$ be an instance of \textsc{Connected Network Microaggregation}. If $d \leq u$, then the problem is fixed-parameter tractable when parameterized by $\mathrm{tw}+d+u$~\cite{Blažej_Ganian_Knop_Pokorný_Schierreich_Simonov_2023}, and hence also when parameterized by $\mathrm{tw}+u$.
Otherwise, if $d>u$, then by Theorem~\ref{thm:connected-dp-tw-u} the problem can be solved in time $(u(\mathrm{tw}+1)^2)^{2(\mathrm{tw}+1)} \cdot n^{O(1)}$, which is fixed-parameter tractable with respect to $\mathrm{tw}+u$.
Thus, in all cases, \textsc{Connected Network Microaggregation} is fixed-parameter tractable when parameterized by $\mathrm{tw}+u$.
\end{proof}

\begin{theorem}
\textsc{Connected Network Microaggregation} is fixed-parameter tractable when parameterized by the cluster vertex deletion number of the input graph plus the upper bound $u$.
\end{theorem}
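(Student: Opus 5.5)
We would prove that \textsc{CNMA} is FPT for $\mathrm{cvd}+u$ by first reducing to the case where all cliques of $G-X$ are small, and then compressing the instance by clique types. Throughout, let $X$ be a cluster vertex deletion set of size $k$; one can be computed in time $O^*(c^k)$ for a constant $c$. If $\ell=1$ the instance is trivially yes, so we assume $\ell\ge 2$.

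\emph{Clusters and cliques.} Let $K$ be a clique of $G-X$. A cluster that contains no vertex of $X$ lies inside a single clique of $G-X$, because $G-X$ is a disjoint union of cliques and the cluster is connected. A cluster that does contain vertices of $X$ meets at most $u-1$ other vertices. Since a cluster has at most $u$ vertices, at most $ku$ vertices of $G-X$ belong to clusters touching $X$.

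\emph{Types and reduction.} The type of a clique $K$ records, for each subset $Y\subseteq X$, the number of vertices $v\in K$ with $N(v)\cap X=Y$, with each count capped at a threshold $\tau=O(ku)$. The parts of $K$ not covered by $X$-clusters must be clusterable internally. Inside a clique the distance constraint is trivial when $d\ge 1$; the case $d=0$ forces singleton clusters and is trivial. So the leftover part of $K$ is feasible exactly when its size can be written as a sum of integers in $[\ell,u]$. Whether $m$ is such a sum depends only on $m$ modulo small values once $m$ is large; concretely, every $m\ge \ell^2$ is representable when $u>\ell$, and when $u=\ell$ representability is just $\ell\mid m$.

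We would then argue that cliques of identical type beyond $ku+1$ copies are interchangeable. Only $k$ clusters touch $X$, so at most $k(u-1)$ cliques interact with $X$ at all, and the remaining cliques of that type must each be clusterable internally. The reduction rule is therefore: check that each surplus clique is internally clusterable, then delete it. This needs care about distances. Deleting a clique could increase distances between vertices of $X$, since paths may route through cliques. Keeping $k+1$ representatives of each type preserves all distances up to $X$, because a representative clique witnesses the same $2$-step connections.

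\emph{Solving the reduced instance.} After the reduction the number of cliques is bounded by $f(k,u)$, but clique sizes may still be large. A large clique $K$ with $|K|>\tau$ has a large twin class; we reduce that class modulo the representability period while keeping at least $ku$ vertices, so that the vertices used by $X$-clusters remain available. The resulting kernel has size $g(k,u)$, and brute force over all partitions and centers finishes the algorithm.

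\emph{Main obstacle.} The main obstacle is the safety of shrinking twin classes inside a large clique under the distance constraint with $d\ge 2$. Clusters touching $X$ may use centers located inside cliques. We would handle this by observing that a center can be replaced by any twin of it, and every shrunken class remains nonempty, so centers survive the reduction.
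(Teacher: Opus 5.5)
Your route differs from the paper's and can be made to work. The paper never shrinks the instance. It guesses how $X$ is split among the at most $k$ clusters meeting $X$, the shape of each such cluster (at most $u$ clique pieces with per-type vertex counts) and a descriptor of its center. It then runs a dynamic program over the cliques $C_1,\dots,C_t$ of $G-X$ (bags $X\cup C_j$), recording which guessed pieces and centers have been realized and testing per clique that the leftover is a sum of integers from $[\ell,u]$. You instead aim for an equivalent induced subgraph of size bounded in $k+u$ and brute-force it. Your rules only delete vertices while keeping a representative of every twin class and every clique type, so distances among surviving vertices are unchanged. The distance constraint, including centers outside their clusters, is then checked exactly at the end. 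The paper must instead argue that a center is captured by $(N(v)\cap X,N(N(v))\cap X)$ and that distances to abstractly described cluster vertices follow from types. That is the delicate part of its proof: the distance from $v$ to a vertex $w$ of another clique depends on the $X$-neighbourhood of $w$'s whole clique, not only on that of $w$. The cost of your route is running time: brute force on your kernel is triple-exponential in $k$, versus the paper's $2^{O(k^2+ku)}(u+1)^{ku\cdot 2^k+O(k)}n^{O(1)}$.

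As written, however, two steps fail. First, keeping at least $ku$ vertices of a twin class is too few when $u>\ell$. Take $X=\{x\}$, $G-X$ a clique of $100$ vertices all adjacent to $x$, $\ell=3$ and $u=4$. Then $G=K_{101}$ is a yes-instance ($101=3\cdot 31+4\cdot 2$), but shrinking the class to $ku=4$ vertices gives $K_5$, a no-instance. Since $X$-clusters may take up to $k(u-1)$ vertices of a clique, you must keep at least $k(u-1)+\ell(\ell-1)$ vertices if $u>\ell$, because every integer $\geq \ell(\ell-1)$ is then a sum of integers from $[\ell,u]$. If $u=\ell$, keep at least $k(u-1)$ vertices with $s'\equiv s\pmod{\ell}$.

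Second, capped clique types forget the clique size, so for $u=\ell$ two cliques of the same capped type may have different residues modulo $\ell$ and are not interchangeable. Your rule does not say what to do when a surplus clique fails the clusterability check. Rejecting is wrong, since that clique may be exactly the one an $X$-cluster uses. Keeping all such cliques leaves their number unbounded unless you reject once more than $k(u-1)$ of them exist.

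The clean repair is the order announced in your first sentence but not followed in the body: shrink twin classes first, then use exact types. Two cliques of equal exact type are exchanged by an automorphism of $G$, so either all cliques of a type are internally clusterable or none is. Keeping $ku$ copies per type then suffices, provided you count the at most $k$ cliques that only host a center of an $X$-cluster, in addition to the at most $k(u-1)$ cliques meeting $X$-clusters; the automorphism carries those centers along. One kept copy per type already preserves distances among survivors. A clique only contributes subpaths of length $2$ or $3$ between vertices of $X$, and both are determined by which $X$-neighbourhoods occur in it.
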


\begin{proof}
Let $(G,d,\ell,u)$ be an instance of \textsc{Network Microaggregation}, and let $X$ be a cluster vertex deletion set of $G$ of size $k$, that is, $G-X$ is a cluster graph. Let $C_1,\dots,C_t$ denote the cliques of $G-X$. 
For every vertex $v \in V(G)\setminus X$, define its type as the set $N(v)\cap X$. Hence there are at most $2^k$ possible types. Let $R_1,\dots,R_{k'}$ denote the clusters that intersect $X$. Since these clusters induce a partition of $X$, the number $k'$ is at most $k$. We first guess this partition of $X$; the number of possibilities is at most $k^k$.

We now guess the structure of each cluster $R_i$. Observe that $R_i$ consists of the vertices in $R_i\cap X$ together with some vertices of $G-X$. Since $G-X$ is a cluster graph, the vertices of $R_i\setminus X$ induce a disjoint union of cliques. Moreover, since every cluster has size at most $u$, the total number of vertices in $R_i$ is at most $u$, and hence $R_i\setminus X$ is a disjoint union of at most $u$ cliques, each of size at most $u$.

For each of these clique-components, we guess how many vertices of each type it contains. More precisely, if a clique-component $Q$ contributes to $R_i$, then for every type $A \subseteq X$ we guess the number $\eta_Q(A)$ of vertices in $Q$ whose neighborhood in $X$ is exactly $A$. Since there are at most $2^k$ possible types and each value $\eta_Q(A)$ lies in $\{0,\dots,u\}$, the number of possibilities for one clique-component is at most $(u+1)^{2^k}$.
Now $R_i\setminus X$ is a disjoint union of at most $u$ clique-components, because $|R_i|\leq u$. Hence the number of possibilities for the whole part $R_i\setminus X$ is at most $((u+1)^{2^k})^u = (u+1)^{u\cdot 2^k}$.
It remains to choose the set $R_i\cap X$. Since $R_i\cap X$ is a subset of $X$, there are at most $2^k$ possibilities for this choice. Therefore, the total number of possibilities for one cluster $R_i$ is at most $2^k \cdot (u+1)^{u\cdot 2^k}$.
Since the number of clusters intersecting $X$ is at most $k$, the total number of guesses for the structures of all clusters $R_1,\dots,R_{k'}$ is at most $(2^k \cdot (u+1)^{u\cdot 2^k})^k = 2^{k^2}\cdot (u+1)^{k u\cdot 2^k}$. Thus the total number of such guesses is bounded explicitly by $2^{k^2}\cdot (u+1)^{k u\cdot 2^k}$.

For each guessed cluster $R_i$, we also guess its center. If the center lies in $X$, then there are at most $k$ possibilities. Otherwise, the center lies in $G-X$. In this case, we describe the center by the pair $(N(v)\cap X,\; N(N(v))\cap X)$ together with the following additional information: whether $v$ lies in one of the guessed clique-components contributing to $R_i$, and if so, in which clique-component of $R_i$ it lies.
Since the structure of $R_i$ has already been guessed, this information suffices to determine the distances from $v$ to all vertices of $R_i$, and hence to verify whether $v$ can serve as a valid center of $R_i$. In particular, this works already for any $d$, and we do not need to assume $d\geq 3$.
The number of possibilities for the pair $(N(v)\cap X,\; N(N(v))\cap X)$ is at most $4^k$. Moreover, the number of guessed clique-components of $R_i$ is at most $u$, since $|R_i|\leq u$. Hence the additional information can be chosen in at most $u+1$ ways, where the extra possibility corresponds to the case that $v$ does not lie in any of the clique-components of $R_i$. Therefore, for each cluster $R_i$, the center can be guessed in at most $k + (u+1)\cdot 4^k$ ways.
Not all guesses constructed above correspond to valid clusters. We therefore discard every guess for which the following conditions are not satisfied.
\begin{itemize}
    \item The graph $G[R_i]$ is connected.
    \item There exists a vertex $v$ (not necessarily in $R_i$) consistent with the guessed description $(N(v)\cap X,\; N(N(v))\cap X)$ such that the distance from $v$ to every vertex of $R_i$ is at most $d$.
\end{itemize}
Since the structure of $R_i$ and the description of $v$ have already been guessed, both conditions can be verified in time depending only on $k+u$. In particular, the distances from $v$ to all vertices of $R_i$ are determined by the guessed types and the placement of $v$ relative to the clique-components.
We henceforth assume that all remaining guesses satisfy these properties.

After these guesses, the clusters intersecting $X$ are fixed up to the actual realization of the guessed clique-components inside the cliques $C_1,\dots,C_t$ of $G-X$. The remaining task is to determine, for each clique $C_j$, which of its vertices are used to realize one of the guessed components of some $R_i$, and how the remaining vertices of $C_j$ are grouped into clusters disjoint from $X$.
To solve this, we use the natural path decomposition of $G$ whose bags are $B_j := X \cup C_j$ for $j \in [t]$, ordered arbitrarily. This is indeed a path decomposition. Since each bag contains all of $X$ and exactly one clique of $G-X$, the interaction between different bags is mediated solely through the bounded set $X$. Consequently, once the clusters intersecting $X$ and their centers have been guessed, the remaining choices can be handled by dynamic programming over this path decomposition with a state space bounded by a function of $k+u$. It remains to define the dynamic programming states and transitions.

\medskip
\noindent
\medskip
\noindent
\textbf{Dynamic Programming over the Path Decomposition.}

\noindent Recall that we process the cliques $C_1,\dots,C_t$ of $G-X$ in order, where $B_j := X \cup C_j$.
For each cluster $R_i$, we have already guessed:
\begin{itemize}
    \item the set $R_i \cap X$,
    \item clique-components $P^i_1,\dots,P^i_{u_i}$ of $R_i \setminus X$,
    \item for each $P^i_s$ and type $A \subseteq X$, the number $\eta_{P^i_s}(A)$,
    \item and the description of a valid center via $(N(v)\cap X,\; N(N(v))\cap X)$.
\end{itemize}

\noindent Importantly, the center is \emph{not required} to belong to $R_i$, nor to any $P^i_s$.

\medskip

\noindent
\textbf{DP State.}
For $j \in \{0,\dots,t\}$, we define:
\[
\mathrm{DP}[j,\mathcal{S},\mathcal{C}] \in \{\texttt{true},\texttt{false}\},
\]
where:
\begin{itemize}
    \item $\mathcal{S} \subseteq \{(i,s) \mid i \in [k'],\, s \in [u_i]\}$ is the set of components that have been realized in the cliques $C_1,\dots,C_j$,
    \item $\mathcal{C} \subseteq [k']$ is the set of indices $i$ such that a valid center vertex for $R_i$ has already been encountered in one of the cliques $C_1,\dots,C_j$.
\end{itemize}

\medskip

\noindent
\textbf{Initialization.}
$\mathrm{DP}[0,\emptyset,\emptyset] = \texttt{true}$, and all other entries are \texttt{false}.

\medskip

\noindent
\textbf{Transitions.}
For each $j \in [t]$, we compute $\mathrm{DP}[j,\cdot,\cdot]$ from $\mathrm{DP}[j-1,\cdot,\cdot]$.
Fix a state $(\mathcal{S},\mathcal{C})$ with $\mathrm{DP}[j-1,\mathcal{S},\mathcal{C}] = \texttt{true}$.
Let $n_j(A)$ denote the number of vertices of type $A \subseteq X$ in $C_j$.

\medskip
\noindent
\emph{Choosing realized components.}
We consider all sets $T \subseteq \{(i,s) \mid (i,s) \notin \mathcal{S}\}$
such that for every $i \in [k']$, there is \emph{at most one} index $s$ with $(i,s) \in T$.
Equivalently, for each cluster $R_i$, we choose either no component or exactly one component $P^i_s$ (not yet in $\mathcal{S}$) to be realized in the clique $C_j$.

\medskip

\noindent
\emph{Feasibility condition.}
For every type $A \subseteq X$, we require:
\[
\sum_{(i,s)\in T} \eta_{P^i_s}(A) \;\le\; n_j(A).
\]

\medskip

\noindent
\emph{Leftover condition.}
Let
\[
L_j := \sum_{A \subseteq X} \left( n_j(A) - \sum_{(i,s)\in T} \eta_{P^i_s}(A) \right).
\]
We require that there exists an integer $a \ge 0$ such that $a\ell \;\le\; L_j \;\le\; a u$.

\medskip

\noindent
\emph{Center realization.}
For each cluster $R_i \notin \mathcal{C}$, we check whether the current clique $C_j$ contains a vertex that is compatible with the guessed center description of $R_i$.
Formally, this holds if there exists a type $A \subseteq X$ such that:
\begin{itemize}
    \item $n_j(A) > 0$, and
    \item $A$ is consistent with $(N(v)\cap X,\; N(N(v))\cap X)$.
\end{itemize}
Let $\mathcal{C}'$ be obtained from $\mathcal{C}$ by adding all such indices $i$.

\medskip

\noindent
\emph{State update.}
We set:
\[
\mathrm{DP}[j,\mathcal{S} \cup T,\mathcal{C}'] = \texttt{true}.
\]

\medskip

\noindent
\textbf{Acceptance.}
We accept if there exists a state $(\mathcal{S},\mathcal{C})$ such that:
\begin{itemize}
    \item $\mathrm{DP}[t,\mathcal{S},\mathcal{C}] = \texttt{true}$,
    \item $\mathcal{S} = \{(i,s) \mid i \in [k'],\, s \in [u_i]\}$,
    \item $\mathcal{C} = [k']$.
\end{itemize}

\medskip

\noindent
\textbf{Correctness.}
Each component $P^i_s$ is realized entirely within a single clique $C_j$, which is valid since $G-X$ is a cluster graph. The leftover condition ensures that vertices not assigned to any $R_i$ can be partitioned into clusters of sizes in $[\ell,u]$. The center condition guarantees that for each cluster $R_i$, there exists a vertex (possibly outside $R_i$) satisfying the required neighborhood constraints. Since the structure of each $R_i$ was guessed to be connected, realizing all its components suffices to ensure connectivity.

\medskip

\noindent
\textbf{Running Time.}
The number of components over all clusters is at most $ku$. Hence the number
of DP states is at most $ 2^{ku} \cdot 2^k  \cdot n$.
For each state, we consider at most $(u+1)^k$ choices for the set $T$, since
for each cluster $R_i$ we either select no component or one of its at most $u$
components. Each transition can be evaluated in time $2^k.n^{\mathcal{O}(1)}$, since all checks are performed over the set of types $A \subseteq X$, of which there are at most $2^k$.
In particular, verifying the feasibility constraints, computing the leftover,
and checking the center condition can all be done by iterating over these types.
Thus, the total running time of the dynamic programming is
$ 2^{O(ku)} \cdot (u+1)^k \cdot n^{\mathcal{O}(1)}$.
Combining this with the number of guesses, the total running time is
$2^{O(k^2 + ku)} \cdot (u+1)^{ku \cdot 2^k + O(k)}\cdot n^{O(1)}$.
\end{proof}

Next, we show that the problem is fixed-parameter tractable when parameterized by the cluster vertex deletion number together with the size of the largest clique. This result requires $d \geq 3$; the complexity of the problem for $d \in \{1,2\}$ remains open.

\begin{theorem}
\textsc{Connected Network Microaggregation} for any $d\geq 3$ is fixed-parameter tractable when parameterized by 
$k + \omega$, where $k$ is the cluster vertex deletion number of the input graph and 
$\omega$ is the maximum clique size in $G-X$, where $X$ is a cluster vertex deletion set.
\end{theorem}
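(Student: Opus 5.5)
The plan is to reduce the problem to an integer linear program whose number of variables is bounded by a function of $k+\omega$, and then invoke Lenstra's theorem, in the spirit of the neighborhood-diversity algorithm at the start of this section. As in the proof for $\mathrm{cvd}+u$, give every vertex $v\notin X$ the type $N(v)\cap X$. Give every clique $C_j$ of $G-X$ the \emph{clique type} consisting of the multiset of the types of its vertices. Since $|C_j|\le\omega$ and there are at most $2^k$ vertex types, there are at most $g(k,\omega):=(2^k+1)^{\omega}$ clique types. Two cliques of the same clique type are interchangeable up to an automorphism of $G$ fixing $X$ pointwise, so only the number of cliques of each clique type matters.

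\textbf{Guessing the clusters that meet $X$.} First guess the partition of $X$ into the sets $R_i\cap X$ of the at most $k$ clusters that meet $X$; there are at most $k^k$ choices. For each such $R_i$, guess its center in one of two forms: either a vertex of $X$, or a description of a vertex outside $X$ given by its vertex type together with the clique type of the clique containing it. Every other cluster lies inside a single clique $C_j$, since $G-X$ is a cluster graph. Each such cluster is then automatically connected, it can use any of its own vertices as a center, and it has size at most $\omega$.

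Next, for each clique $C_j$, describe how it is split by a \emph{pattern}. A pattern assigns every vertex of $C_j$ either to some $R_i$ or to ``free'', and this is encoded as type counts, so there are at most $(k+1)^{\omega}$ patterns per clique type. A pattern is admissible only if the free part of $C_j$ can be partitioned into pieces whose sizes all lie in $[\ell,u]$. This is a check on a number that is at most $\omega$.

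\textbf{The ILP.} Introduce a variable $x_{\tau,p}$ for the number of cliques of clique type $\tau$ that use pattern $p$. The constraints are as follows.
\begin{itemize}
\item For every clique type $\tau$, the sum $\sum_p x_{\tau,p}$ equals the number of cliques of type $\tau$.
\item For every $i$, the size constraint $\ell\le |R_i\cap X|+\sum_{\tau,p} x_{\tau,p}\cdot|p^{-1}(i)|\le u$ holds.
\item If the guessed center of $R_i$ lies outside $X$ with clique type $\tau$, then at least one clique of type $\tau$ exists.
\end{itemize}
The number of variables is bounded by $g(k,\omega)\cdot(k+1)^{\omega}$, so Lenstra's theorem solves the program in FPT time.

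\textbf{Connectivity and distance.} Both must be expressed in terms of types. Every piece $p^{-1}(i)\cap C_j$ is itself a clique, so it is connected. The union $R_i$ is connected iff the auxiliary graph on $R_i\cap X$ together with the occurring ``piece types'' is connected; a piece type is the union of $N(v)\cap X$ over the vertices of the piece. This condition depends only on which $(\tau,p)$ have $x_{\tau,p}\ge 1$. So guess the support set $\{(\tau,p): x_{\tau,p}\ge1\}$ in advance, at cost $2^{\text{bounded}}$ guesses, and add the constraints $x_{\tau,p}\ge 1$ on the support and $x_{\tau,p}=0$ off it.

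\textbf{Main obstacle: the distance constraint.} A cluster meeting $X$ may absorb an unbounded number of cliques, so the distances inside it cannot be checked one vertex at a time. I expect this step to be where $d\ge 3$ is needed, and the argument I would try first is the following. For a vertex $w$ in a clique $C_j$ and a center $c$, the distance $\dist(c,w)\le 3$ is decided by $N(c)\cap X$, by $N(N(c))\cap X$, and by the type of $C_j$ as a whole. The reason is that a path of length at most $3$ either stays within one clique, or passes through $X$, or moves from $w$ to a clique-mate adjacent to a neighbor of $c$ in $X$.

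The difficulty is that this type-level description only certifies distances up to $3$, whereas $d$ may exceed $3$. I would try to argue that for $d\ge 3$ the cases needed are captured by this type information, so that whether a clique's piece is within distance $d$ of the guessed center depends only on $(\tau,p)$ and the center description. Admissibility of a support pattern is then a type-level check, and the ILP above solves the instance. If $d\ge 3$ is used elsewhere, it would be to rule out the situation in which the center's own clique must be split in a way that the type description does not see; I would verify that case separately.
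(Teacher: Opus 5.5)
Your set-up is sound: clique types as multisets of vertex types, the interchangeability of equal-type cliques, the constraints $\sum_p x_{\tau,p}=n_\tau$, and guessing the support to express connectivity of each $R_i$ all work (the paper's ILP has no connectivity constraint for the $R_i$ at all). The genuine gap is the distance condition. Your ILP has no constraint forcing $\dist_G(v,c_i)\le d$ for the vertices absorbed into $R_i$, and your last paragraph does not supply one. The certificate you sketch only decides $\dist\le 3$, the case $d>3$ is left to an argument you ``would try'', and the center's own clique is deferred. Since $R_i$ can absorb unboundedly many cliques, this is the core of the theorem.

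The description you lean on, $(N(c)\cap X,\,N(N(c))\cap X)$, is the paper's route via Claim~\ref{unique center}, and your worry about the center's clique is justified. Take $X=\{x\}$, cliques $\{a,u,w\}$ and $\{b,w'\}$ of $G-X$, and let $x$ be adjacent only to $w$ and $w'$. Then $a$ and $b$ both have the pair $(\emptyset,\{x\})$, yet $\dist_G(u,a)=1$ and $\dist_G(u,b)=4$. So already for $d=3$ this pair does not determine which vertices may join a cluster.

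The missing idea is already in your set-up. Take $c\notin X$ and $w\notin X$ lying in different cliques, and another such pair $(c',w')$ with the same vertex types and clique types. Some automorphism of $G$ fixing $X$ pointwise maps $c\mapsto c'$ and $w\mapsto w'$. To build it, permute the cliques within each clique type, sending the clique of $c$ to that of $c'$ and the clique of $w$ to that of $w'$; this is possible since each side involves two distinct cliques. Then match vertices of equal type inside corresponding cliques. Hence $\dist_G(c,w)$ is a function of these four types, for every $d$. Likewise $\dist_G(c,x)$ for $x\in X$ depends only on the types of $c$, and $\dist_G(x,w)$ for a center $x\in X$ depends only on the types of $w$.

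Compute these values on representatives. Then discard every pattern that sends to $R_i$ a vertex type whose distance to the guessed center of $R_i$ exceeds $d$, and every guess in which a vertex of $R_i\cap X$ is too far from its center. The only exception is the center's own clique, all of whose vertices are within distance $1$ of the center even when other cliques of the same type are not. Handle it as follows: guess at most $k$ distinct special cliques hosting the centers outside $X$, together with their clique types and patterns. Inside a special clique, apply the distance filter only to clusters whose center lies elsewhere, and subtract the special cliques from the counts $n_\tau$.

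With this addition your ILP is complete. It uses the full clique type rather than the paper's pair description, and it never uses the hypothesis $d\ge 3$.
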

\begin{proof}
Let $X \subseteq V(G)$ be a cluster vertex deletion set such that $G-X$ is a cluster graph. 
We begin by guessing the partition of the vertices of $X$ into clusters. 
More precisely, we guess all clusters that intersect $X$, together with their intersections with $X$. 
Since these intersections form a partition of $X$, the number of such guesses is bounded by $k^{\mathcal{O}(k)}$.
For each guessed cluster intersecting $X$, we also guess its center. 
If the center lies in $X$, then there are at most $k$ possible choices. 
Otherwise, the center lies in $V(G)\setminus X$. 
In this case, the following Claim~\ref{unique center} proves that the behavior of the center is completely determined by the pair $(N(c)\cap X,\; N(N(c))\cap X)$.

\begin{claim}\label{unique center}
Let $X$ be a cluster vertex deletion set of $G$, and suppose $d\ge 3$. 
Let $a,b\in V(G)\setminus X$ be vertices belonging to (not necessarily distinct) cliques of $G-X$ such that
$N(a)\cap X = N(b)\cap X$ and $N(N(a))\cap X = N(N(b))\cap X$.
Then for every vertex $u\in V(G)$ it holds that
$\operatorname{dist}_G(u,a)\le d \quad\text{if and only if}\quad \operatorname{dist}_G(u,b)\le d$.
\end{claim}

\begin{proof}
Since $G-X$ is a cluster graph, every vertex of $V(G)\setminus X$ lies in a clique of $G-X$. 
Consider any vertex $u\in V(G)$.
If $u\in X$, then $u$ is adjacent to $a$ if and only if $u\in N(a)\cap X$, which holds if and only if $u\in N(b)\cap X$. 
Hence $\operatorname{dist}_G(u,a)=1 \quad\text{if and only if}\quad \operatorname{dist}_G(u,b)=1$.
If $u\notin X$, then $u$ lies in some clique of $G-X$. Since $d\ge 2$, the only paths of length at most two from $u$ to $a$ or $b$ must pass through vertices of $X$ or through vertices in the same clique. 
The vertices of $X$ that are adjacent to $a$ and those at distance two from $a$ are exactly the sets $N(a)\cap X$ and $N(N(a))\cap X$, respectively. 
By assumption these sets coincide with $N(b)\cap X$ and $N(N(b))\cap X$.
Therefore the same vertices of $X$ serve as intermediaries for paths from $u$ to $a$ and from $u$ to $b$ of length at most two.
Consequently $u$ is at distance at most $d$ from $a$ if and only if it is at distance at most $d$ from $b$. 
\end{proof}

\noindent \textbf{Remark.} We note that the assumption $d \ge 3$ is necessary for the above argument. 
Indeed, the statement of Claim~\ref{unique center} fails for $d=2$.
To see this, consider the following example. 
Let $a$ and $b$ be vertices in two distinct cliques of $G-X$ such that  $N(a)\cap X = N(b)\cap X$ and  $N(N(a))\cap X = N(N(b))\cap X$.
Let $u_1$ be a vertex adjacent to $a$ but not to $b$, and let $u_2$ be a vertex adjacent to $b$ but not to $a$, where both $u_1$ and $u_2$ lie in cliques of $G-X$.
Then we have
$\operatorname{dist}(u_1,a)=1, \quad \operatorname{dist}(u_1,b)=3$,
and
$\operatorname{dist}(u_2,b)=1, \quad \operatorname{dist}(u_2,a)=3$.
Thus, for $d=2$, the vertices $a$ and $b$ are not equivalent with respect to distance-$d$ neighborhoods, even though they agree on $N(\cdot)\cap X$ and $N(N(\cdot))\cap X$.
This shows that for $d=2$, the pair $(N(c)\cap X,\; N(N(c))\cap X)$ does not uniquely determine the center behavior, and hence our approach requires the assumption $d \ge 3$. \\

Since each of these sets is a subset of $X$, there are at most $2^k\cdot 2^k = 4^k$ possibilities for such a pair. 
Hence, for every cluster intersecting $X$, the center can be guessed in at most $k+4^k$ ways.
Since every cluster intersecting $X$ contains at least one vertex of $X$, 
the number of such clusters is at most $k$. 
The number of ways to partition $X$ into at most $k$ parts is at most $k^k$. 
For each cluster, the center can be guessed in at most $k+4^k$ ways. 
Therefore, the total number of guesses is at most $k^k \cdot (k+4^k)^k \;\le\; (k+4^k)^k \cdot k^k \;\le\; (4^k + k)^{2k}$.

For every vertex $v \in V(G)\setminus X$, define its \emph{type} as the set 
$N(v)\cap X$. For each subset $A \subseteq X$, let 
$V_A := \{ v \in V(G)\setminus X \mid N(v)\cap X = A \}$.
Thus, the vertices of $G-X$ are partitioned into at most $2^k$ types.
Let $R_1,\dots,R_{k'}$ denote the clusters that intersect $X$, where $k' \le k$. 
For a clique $C$ of $G-X$, we refine the above description by considering how the vertices of $C$ are distributed among the clusters $R_i$ according to their types. 
For every $i \in [k']$ and every $A \subseteq X$, define
$\eta_C(i,A) := |R_i \cap V_A \cap C|$.
That is, $\eta_C(i,A)$ counts the number of vertices of type $A$ in the clique $C$ that are assigned to the cluster $R_i$.
We say that two cliques $C$ and $C'$ of $G-X$ are \emph{equivalent} if
$\eta_C(i,A) = \eta_{C'}(i,A) \quad \text{for all } i \in [k'] \text{ and all } A \subseteq X$ and also both cliques are of equal size.
In other words, two cliques are equivalent if their vertices are distributed in exactly the same way across the clusters $R_i$, with respect to all types.
Observe that each clique $C$ can thus be represented by a vector 
$\big( \eta_C(i,A) \big)_{i \in [k'],\, A \subseteq X}$,
which has at most $k \cdot 2^k$ entries. Each entry is a non-negative integer bounded by $|C|$, and hence by $\omega := \max\{|C| : C \text{ is a clique of } G-X\}$.
Therefore, the number of equivalence classes of cliques is at most
$(\omega+1)^{(k \cdot 2^k +1)}$. \\

Let $\mathcal{P}$ denote the set of all patterns (equivalence classes). 
Each pattern $p \in \mathcal{P}$ consists of:
\begin{itemize}
    \item an integer $s(p) \in \{0,\dots,\omega\}$ denoting the size of the clique, and
    \item integers $\eta_p(i,A)$ for every $i \in [k']$ and $A \subseteq X$, where $\eta_p(i,A) \in \{0,\dots,\omega\}$.
\end{itemize}

We now determine which of these patterns can actually appear in a feasible solution. 
Recall that we have already fixed a partition of $X$ into clusters $R_1,\dots,R_{k'}$ together with a guessed center $c_i$ for each cluster $R_i$.

For each $i \in [k']$ and each type $A \subseteq X$, we test whether vertices of type $A$ can be assigned to $R_i$. 
More precisely, we check whether there exists a vertex $v \in V_A$ such that $\operatorname{dist}_G(v,c_i) \le d$. 
By Claim~\ref{unique center}, this condition depends only on the type $A$ and the chosen center, and hence can be checked in polynomial time.

If no such vertex exists, then no vertex of type $A$ can belong to $R_i$ in any feasible clustering. 
Therefore, for every pattern $p$, we must have $\eta_p(i,A)=0$.

We impose an additional feasibility condition on each pattern. 
For a pattern $p$, consider the number of vertices of a clique that are not assigned to any of the clusters $R_1,\dots,R_{k'}$, that is,
$s(p) - \sum_{i \in [k']} \sum_{A \subseteq X} \eta_p(i,A)$.
These vertices must be covered by clusters that are disjoint from $X$. 
Therefore, this quantity must be expressible as a sum of integers from the interval $[\ell,u]$. 
This condition can be checked in polynomial time, and if it is not satisfied, the pattern is discarded.

We call a pattern $p \in \mathcal{P}$ \emph{feasible} if it satisfies both conditions: 
(i) $\eta_p(i,A)=0$ for all inadmissible pairs $(i,A)$, and 
(ii) the remaining vertices $s(p) - \sum_{i,A} \eta_p(i,A)$ can be written as a sum of integers from $[\ell,u]$.
Thus, after fixing the partition of $X$ and the centers of the clusters intersecting $X$, we can in polynomial time filter the set $\mathcal{P}$ and retain only those patterns that are feasible.
We now ensure that all vertices of $G-X$ are covered. 
Recall that for each pattern $p$, the variable $x_p$ denotes the number of cliques of type $p$, and each such clique contributes $s(p)$ vertices.
Therefore, we impose the constraint
\[
\sum_{p \in \mathcal{P}} s(p)\, x_p \;=\; |V(G)\setminus X|.
\]

We now enforce the size constraints for clusters intersecting $X$. 
Recall that for each $i \in [k']$, the set $R_i \cap X$ is fixed by the initial guess, and hence its size is known.
The remaining vertices of $R_i$ come from cliques of $G-X$ according to the patterns. 
For each pattern $p$, the value $\sum_{A \subseteq X} \eta_p(i,A)$ denotes the number of vertices contributed to $R_i$ by a single clique of type $p$. 
Since there are $x_p$ such cliques, the total contribution to $R_i$ is 
$\sum_{p \in \mathcal{P}} x_p \cdot \sum_{A \subseteq X} \eta_p(i,A)$.
Thus, for every $i \in [k']$, we impose the constraint
\[
\ell \;\le\; |R_i \cap X| \;+\; \sum_{p \in \mathcal{P}} x_p \cdot \sum_{A \subseteq X} \eta_p(i,A) \;\le\; u.
\]

The number of variables is bounded by $|\mathcal{P}| \le (\omega+1)^{k\cdot 2^k + 1}$, 
and hence depends only on the parameter. Therefore, by Lenstra's theorem, 
the ILP can be solved in fixed-parameter tractable time.
\end{proof}

\section{Lower bounds on Kernelization}

\begin{theorem}
Unless coNP $\subseteq$ NP/poly, \textsc{Connected Network Microaggregation}
does not admit a polynomial kernel when parameterized by the vertex cover number
of the input graph, even for all $d \ge 3$.
\end{theorem}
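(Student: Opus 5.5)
We will use an OR-cross-composition (Bodlaender, Jansen, Kratsch) into \textsc{Connected Network Microaggregation} parameterized by the vertex cover number. Since the problem is FPT by vertex cover (via neighborhood diversity, shown above), any NP-hard source problem with a cross-composition gives the claimed lower bound. As source we take a strongly NP-hard number problem that already fits the cluster-size structure, namely \textsc{3-Partition}. Its instances are multisets $a_1,\dots,a_{3m}$ with target $B$, encoded in unary. The polynomial equivalence relation groups instances by the values $m$ and $B$; malformed instances form their own class. Thus we compose $t$ instances $I_1,\dots,I_t$ that share $m$ and $B$. Because of the standard padding trick, we may assume $t=2^s$. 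The parameter of the output must be polynomial in $\max_j|I_j|+\log t$.

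\textbf{Construction.} We build a graph with a small vertex cover $X$, consisting of the following parts.
\begin{itemize}
\item A set of $m$ ``bin'' vertices $b_1,\dots,b_m\in X$, each meant to become the core of one cluster of size $B+1$ plus a gadget part.
\item For each instance $I_j$, its items realized as independent (degree-bounded) pendant structures. An item $a$ of instance $j$ is a star whose centre lies outside $X$ and has $a-1$ leaves; this is too many vertices outside the cover, so instead we attach items as groups of false twins adjacent to bin vertices. The cover must also not grow with $t$.
\item Selector vertices: $2s$ vertices $x_1^0,x_1^1,\dots,x_s^0,x_s^1$ in $X$, encoding the binary representation of the index $j$.
\end{itemize}
The intended behaviour is as follows. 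Vertices belonging to an instance $j$ that is \emph{not} chosen must be absorbable into clusters formed with the selector vertices. Vertices of the chosen instance must go to bin clusters. Concretely, instance $j$'s item-gadgets are adjacent to the selector vertices $x_i^{1-\mathrm{bit}_i(j)}$. A dummy structure then forces exactly one bit pattern to be ``blocked'', namely the selector vertices used elsewhere. We then choose $\ell$ and $u$ (possibly $\ell=u$) so that each cluster has a prescribed size, and set $d\ge 3$ large relative to the gadget diameter. All clusters then have diameter at most $3$ through a universal-ish hub, so the distance constraint is automatically satisfied for every $d\ge3$.

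\textbf{Correctness and parameter.} With this construction, $|X|=m+2s+O(1)$, which is polynomial in $\max|I_j|+\log t$. The non-cover vertices form an independent set, so each cluster contains a cover vertex. As in the kernel proof above, the number of clusters is therefore at most $|X|$. The forward direction takes a yes-instance $I_j$ and builds the clusters explicitly. The backward direction uses counting: the cluster sizes are fixed and there are few clusters, so the selector clusters can swallow the items of at most all-but-one instance. The remaining instance must then be packed exactly into the $m$ bins, which gives a valid $3$-partition.

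\textbf{Main obstacle.} The hard part is encoding item \emph{values} without putting per-item vertices into the cover, since such vertices would make the cover grow with $t$. Items must be represented by vertices outside $X$, which are therefore adjacent only to cover vertices. The connectivity of a cluster must force all vertices of one item to go to the same bin. We would first try making each item a set of $a_i$ false twins outside $X$ adjacent to a single shared ``item-slot'' cover vertex. Item slots would be indexed by the position $p\in[3m]$ and shared across instances, giving $3m$ extra cover vertices. We would then rely on exact sizes $\ell=u$ and the selector adjacency to prevent splitting an item across bins. If that fails, the fallback is to compose from a different source, for example \textsc{Unary Bin Packing} or a multicolored clique/partition variant where items have unit size.
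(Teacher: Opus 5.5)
\paragraph{Gap: the item gadget is missing.} Your proposal is a plan, not yet a proof. The gadget everything depends on is left open: representing an item of value $a$ by vertices outside the cover so that all $a$ of them are forced into the same bin cluster. The variants you suggest do not work.

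\emph{False twins.} Suppose the item vertices are false twins adjacent to bin vertices, or to bins plus instance-specific selectors. Then the items of one instance are indistinguishable, and each copy can go to a different bin independently. The only remaining constraint on the bins is a total vertex count, which every \textsc{3-Partition} instance with $\sum_i a_i=mB$ satisfies, so the composition loses all information.

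\emph{Private anchors.} Forcing integrity as the paper does in its other reductions has the $a$ vertices hang off a private vertex $q$, so connectivity drags them into $q$'s cluster. That puts at least one cover vertex into every item, giving $\Theta(tm)$ cover vertices. This violates the bound $\mathrm{poly}(\max_j|I_j|+\log t)$.

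\emph{Shared item slots.} This variant puts the items of all $t$ instances at position $p$ into the neighbourhood of the same slot vertex. All $tmB$ item vertices must fit into at most $|X|$ clusters, so $u\ge tmB/|X|$. That leaves plenty of material for a slot vertex to form its own size-$u$ cluster out of non-chosen items. Nothing stops a bin cluster from topping up an under-full bin with filler from non-chosen instances, so exact bin sums are not enforced.

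The selector ``dummy structure'' that blocks exactly one bit pattern is also unspecified. Consequently your backward ``counting'' argument has nothing concrete to count.

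\paragraph{How the paper avoids this.} The obstacle you identified suggests a number problem is a poor source for a vertex-cover parameter. The paper instead gives a direct polynomial parameter transformation from \textsc{Red-Blue Dominating Set} parameterized by $|B|$.
\begin{itemize}
\item Blue vertices go into the cover, each with $c$ pendants; red vertices stay outside.
\item Two hubs $x,y$ are adjacent to all of $R$ and padded with pendants and two paths of length~$3$. This makes $|V(G')|=2u$ with $\ell=u$, so there are exactly two clusters.
\item Size accounting shows $x$'s cluster cannot contain a blue vertex, since that vertex's pendants would overflow it. Hence it takes exactly $n-k$ red vertices.
\item Connectivity of $y$'s cluster then forces its $k$ red vertices to dominate $B$.
\item The distance constraint holds because $y$ reaches every vertex within distance $3$, via $y-r-z-p$ for the blue pendants.
\end{itemize}
The cover is $B\cup\{x,y,x_2,x_2',y_2,y_2'\}$, of size $|B|+6$. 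The selection is a cardinality constraint expressed by exact cluster size, and the covering condition is connectivity. Neither costs anything in the vertex cover, so no per-item gadget is needed.
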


\begin{proof}
We give a polynomial parameter transformation from
\textsc{Red-Blue Dominating Set} parameterized by $|B|$.
Recall that an instance of \textsc{Red-Blue Dominating Set} consists of a bipartite graph
$G=(R \uplus B,E)$ and an integer $k$, and the task is to decide whether there exists
a set $S \subseteq R$ of size at most $k$ such that every vertex of $B$ has a neighbor in $S$.
It is known that this problem does not admit a polynomial kernel when parameterized by $|B|+k$,
unless coNP $\subseteq$ NP/poly.

Let $(G=(R \uplus B,E),k)$ be an instance of \textsc{Red-Blue Dominating Set}.
Set $n:=|R|$ and $t:=|B|$. We may assume that $k \le n$ and $t \ge 1$, since otherwise
the instance is trivial.
We construct an instance $(G',d,\ell,u)$ of \textsc{Connected Network Microaggregation} as follows.
We keep the graph $G$.
For every blue vertex $z \in B$, we add a set $P_z$ of $c$ pendant vertices adjacent only to $z$.
Next, we add two new vertices $x$ and $y$, each adjacent to all vertices of $R$.

We now attach gadgets to $x$ and $y$.
For $x$, add $a-6$ pendant vertices adjacent only to $x$, and two internally vertex-disjoint
paths of length $3$ starting at $x$, say
$x-x_1-x_2-x_3$
and
$x-x'_1-x'_2-x'_3$.
Thus the gadget of $x$ contributes exactly $a$ vertices besides $x$.
Similarly, for $y$, add $b-6$ pendant vertices adjacent only to $y$, and two internally vertex-disjoint
paths of length $3$ starting at $y$, say
$y-y_1-y_2-y_3$
and
$y-y'_1-y'_2-y'_3$.
Thus the gadget of $y$ contributes exactly $b$ vertices besides $y$.
We define
$c := n-k+6,\qquad
b := c+1 = n-k+7,\qquad
a := k+7+t(c+1).$
Finally, let $u := 1+a+(n-k)$. Note that then
$u = 1+b+k+t(1+c)$, and furthermore
$1+a+1+c > u$ and $a>b>c$.
Finally, let $d \ge 3$ be arbitrary, and set $\ell := u$.
By construction,
\[
|V(G')|
= n + t + tc + 2 + a + b
= \bigl(1+a+(n-k)\bigr) + \bigl(1+b+k+t(1+c)\bigr)
= 2u.
\]
Hence every feasible clustering of $G'$ consists of exactly two clusters, each of size exactly $u$.

\medskip
\noindent
\textbf{Parameter bound.}
The graph $G'$ has a vertex cover of size at most $|B|+6$, namely
$B \cup \{x,y,x_2,x'_2,y_2,y'_2\}$.
Indeed, the vertices of $B$ cover all edges of the original bipartite graph and all edges to the pendant vertices $P_z$;
the vertex $x$ covers all edges from $x$ to $R$ and all pendant edges of the $x$-gadget;
the vertices $x_2,x'_2$ cover the remaining edges on the two length-$3$ paths of the $x$-gadget;
and analogously $y,y_2,y'_2$ cover all edges incident with the $y$-gadget and the edges from $y$ to $R$.
Hence the parameter of the produced instance is bounded polynomially in $|B|$.

\medskip
\noindent
\textbf{Forward direction.}
Assume that $(G,k)$ is a yes-instance of \textsc{Red-Blue Dominating Set}.
Then there exists a set $S \subseteq R$ of size at most $k$ dominating $B$.
By adding arbitrary vertices of $R \setminus S$ if necessary, we may assume that $|S|=k$.
We define
$C_x := \{x\} \cup V(X\text{-gadget}) \cup (R \setminus S)$ and
$C_y := \{y\} \cup V(Y\text{-gadget}) \cup S \cup B \cup \bigcup_{z\in B} P_z$.

By the choice of $u$,
$|C_x| = 1+a+(n-k)=u$,
and
$|C_y| = 1+b+k+t(1+c)=u$.
Thus $C_x$ and $C_y$ form a partition of $V(G')$ into two sets of size exactly $u$.
We claim that both $G'[C_x]$ and $G'[C_y]$ are connected and admit a center within distance at most $3$.
For $C_x$, the graph $G'[C_x]$ is connected because every vertex of the gadget of $x$ is connected to $x$, and every vertex of $R \setminus S$ is adjacent to $x$.
Moreover, $x$ is a center: every vertex of the gadget of $x$ is at distance at most $3$ from $x$, and every vertex of $R \setminus S$ is adjacent to $x$.

For $C_y$, the graph $G'[C_y]$ is connected because the gadget of $y$ is connected to $y$, every vertex of $S$ is adjacent to $y$, every blue vertex $z \in B$ has a neighbor in $S$ by domination, and every pendant in $P_z$ is adjacent to $z$.
Moreover, $y$ is a center. Every vertex of the gadget of $y$ is at distance at most $3$ from $y$, every vertex of $S$ is adjacent to $y$, every blue vertex $z \in B$ is at distance at most $2$ from $y$, and every pendant in $P_z$ is at distance exactly $3$ from $y$ via a path $y-r-z-p$ for some $r \in S$ adjacent to $z$.
Hence $(G',d,\ell,u)$ is a yes-instance.

\medskip
\noindent
\textbf{Backward direction.}
Assume now that $(G',d,\ell,u)$ is a yes-instance.
Since $|V(G')|=2u$ and every cluster has size exactly $u$, there are exactly two clusters.

We first show that all vertices of the $x$-gadget lie in the same cluster as $x$.
Indeed, every pendant adjacent to $x$ has no neighbor other than $x$, so it must lie in the same cluster as $x$.
Now consider one of the two length-$3$ paths attached to $x$, say $x-x_1-x_2-x_3$.
If some vertex of this path were not in the same cluster as $x$, then the vertices of the path not containing $x$ would induce a connected component of size at most $3$ in that cluster.
Since every cluster has size exactly $u$ and $u>3$, such a component cannot form a cluster on its own, nor can it be connected to any other vertex outside the path.
Hence all vertices of the path must lie in the same cluster as $x$.
The same argument applies to the second path and also to the whole gadget of $y$.

Thus, if $x$ and $y$ were in the same cluster, then that cluster would contain
$x$, $y$, the whole gadget of $x$, and the whole gadget of $y$, and therefore would have size at least
$2+a+b = u+8 > u$,
a contradiction.
Hence $x$ and $y$ belong to different clusters. Let $C_x$ be the cluster containing $x$, and let $C_y$ be the cluster containing $y$.

We next show that $C_x$ contains no blue vertex.
Suppose that some $z \in B$ belongs to $C_x$.
Since every vertex of $P_z$ is adjacent only to $z$, all vertices of $P_z$ must also belong to $C_x$.
Therefore
$|C_x| \ge 1+a+1+c > u$,
a contradiction.
Thus $C_x$ contains no vertex of $B$ and no pendant adjacent to a blue vertex.
Consequently, $C_x$ consists only of $x$, its gadget, and some vertices of $R$.
Since $|C_x|=u$, we obtain from the equality
$1+a+(n-k)=u$
that $C_x$ contains exactly $n-k$ vertices of $R$.
Hence $C_y$ contains exactly $k$ vertices of $R$.

Now consider any blue vertex $z \in B$.
Since $z \notin C_x$, we have $z \in C_y$.
Moreover, every vertex of $P_z$ also belongs to $C_y$, because each such vertex is adjacent only to $z$.
We claim that $z$ has a neighbor in $C_y \cap R$.
Indeed, otherwise the vertices $z$ and $P_z$ would induce a connected component of $G'[C_y]$ disconnected from the rest of the cluster, since $z$ has neighbors only in $R$ and in $P_z$.
This contradicts the connectedness of $G'[C_y]$.
Therefore every blue vertex $z \in B$ has a neighbor in $C_y \cap R$.
Since $|C_y \cap R|=k$, the set $C_y \cap R$ is a red-blue dominating set of size exactly $k$ for the original instance $(G,k)$.
Therefore $(G,k)$ is a yes-instance.

This proves the equivalence. Since the parameter of the constructed instance is polynomially bounded in $|B|$, this is a polynomial parameter transformation from \textsc{Red-Blue Dominating Set} parameterized by $|B|$. The claimed kernel lower bound follows.
\end{proof}

\begin{theorem}
Unless coNP $\subseteq$ NP/poly, \textsc{Connected Network Microaggregation}
does not admit a polynomial kernel when parameterized by the vertex deletion
distance to a clique, even when $d=3$.
\end{theorem}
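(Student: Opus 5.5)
I would give a polynomial parameter transformation from \textsc{Red-Blue Dominating Set} parameterized by $|B|+k$, reusing the skeleton of the previous theorem: exactly two clusters $C_x,C_y$ of size exactly $u=\ell$, with $|V(G')|=2u$. The new requirement is that deleting a set $D$ of size polynomial in $|B|+k$ must leave a single clique $K$. This rules out the pendant sets $P_z$ used before to keep blue vertices out of $C_x$, since they would lie outside any one clique. The plan is to replace that size-based exclusion by a distance-based one, which is why $d=3$ is fixed.

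\textbf{Construction.} Put $K:=R\cup Q$, where $R$ is turned into a clique and $Q$ is a padding clique of size to be tuned, fully joined to $R$. The edges inside $R$ do not matter for domination. Keep each blue vertex $z\in B$ adjacent exactly to its red neighbours. Add two special vertices. Let $y$ be adjacent to all of $R$, carrying a constant-size rigid gadget (two paths of length $3$ plus a bounded number of attached vertices) and one distinguished path $y-y_1-y_2-y_3$. Let $x$ be attached only to $Q$, through a short path whose length is chosen so that, for every vertex $w$ that is within distance $3$ of the far end of $x$'s own length-$3$ path, every blue vertex lies at distance at least $4$ from $w$ (for example $x-p-Q$ already gives $\dist(p,z)=3$, so the path must be slightly longer). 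The deletion set is $D:=B\cup\{x,y\}\cup(\text{gadget vertices})$, of size $|B|+O(1)$. The remaining parameters $|Q|$ and $u$ are then solved for so that $|C_x|=|C_y|=u$ exactly when $C_y\cap R$ has $k$ vertices, just as in the previous proof.

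\textbf{Forward direction.} Given a dominating set $S$ with $|S|=k$, take $C_y:=\{y\}\cup\text{gadget}(y)\cup S\cup B$ with centre $y$. Every blue vertex is at distance $2$ from $y$ through $S$, and $C_y$ is connected because $S$ dominates $B$. Take $C_x$ to be the rest, namely $x$, its gadget, $Q$ and $R\setminus S$, with centre $x$ or the vertex next to it on the path; $C_x$ is connected through $Q$.

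\textbf{Backward direction.} First, rigidity of the gadgets (no small pieces can be split off, as in the previous proof) puts each gadget in the cluster of its root, and the sizes separate $x$ from $y$. Second, the distinguished path at $x$ forces the centre of $C_x$ into a small set around $x$. By the choice of the connecting path length, every such centre is at distance at least $4$ from every blue vertex, so $B\subseteq C_y$. Third, each $z\in C_y$ has all its neighbours in $R$, so connectivity of $G'[C_y]$ forces $z$ to have a neighbour in $C_y\cap R$. Size counting, using that $Q$ cannot be in $C_y$ in bulk because of the size bounds and the distance to $y$'s centre, then gives $|C_y\cap R|=k$, so $C_y\cap R$ is a red-blue dominating set of size $k$.

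\textbf{Main obstacle.} The main obstacle is the simultaneous tuning of distances and sizes. Because $K$ is a clique, every vertex at distance at most $1$ from $K$ is within distance $3$ of all of $B$. So both the placement of $x$ (at distance at least $2$ from $K$) and the set of possible centres of $C_y$ must be checked carefully, so that centres near $y$ reach $B$ while centres near $x$ do not. I would first fix the distances with $|Q|$ free, and only then solve the resulting linear equations for $|Q|$, the gadget sizes and $u$.
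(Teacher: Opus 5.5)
\textbf{Gap: the cluster sizes cannot be balanced.} The failure is in the size balancing, and it comes from attaching $x$ to $Q$. In your forward solution, $C_y=\{y\}\cup\mathrm{gadget}(y)\cup S\cup B$ has size $k+t+O(1)$, so $u=k+t+O(1)$. The gadget of $y$ cannot be enlarged, because every vertex outside the clique $K$ must go into the deletion set. But $C_x\supseteq Q\cup(R\setminus S)$, so $|C_x|=|C_y|$ forces $|Q|=2k+t-n+O(1)$. This is negative whenever $n>2k+t+O(1)$, which is the typical case, since $|R|$ is not polynomially bounded in $|B|+k$. On instances with $n\le 2k+t+O(1)$, the \textsc{Red-Blue Dominating Set} instance is itself a polynomial kernel, so a transformation restricted to them proves nothing.

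The padding must therefore go into the cluster containing $B$, and it must be \emph{forced} there. Otherwise $C_x$ can absorb padding vertices in exchange for extra red vertices. The backward direction would then only give a dominating set of size $k$ plus the padding size. In your design this cannot be arranged. Since $x$ reaches $Q$ at distance $2$ and $K$ is a clique, every vertex of $K$ is within distance $3$ of $x$. So no padding vertex inside $K$ can be kept out of $C_x$ by the distance constraint.

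\textbf{Missing idea.} Attach $x$'s arm to $R$, not to the padding. Let the end of an arm of length $2$ from $x$ be adjacent to all of $R$ and to no padding vertex. Then $R$ is at distance $3$ from $x$, while $B$ and the padding are at distance at least $4$ from every admissible centre of $C_x$. Both are thus excluded from $C_x$. The size equation becomes $|P|=n-2k-t+O(1)$, which you make nonnegative by adding isolated red vertices. This is exactly the paper's construction:
\begin{itemize}
\item the root $x_0$ has arms $x_0x_1x_2x_3$, $x_0x_4x_5x_6$ and $x_0x_7x_8$, with $x_8$ joined to all of $R$;
\item $R\cup P$ is a clique;
\item the second cluster is $C_1=S\cup B\cup P$, centred at $x_8$.
\end{itemize}
No second root $y$ is needed.

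\textbf{Two smaller errors.} First, your remark that $x-p-Q$ is too short is mistaken. The vertex $p$ is at distance $4$ from $x_3$, so it is not an admissible centre; the admissible centres are $x,x_1,x_2,x_3$, and $\dist(x,z)\ge 4$ already holds for all $z\in B$. Lengthening the path as you suggest would put all of $R$ at distance at least $4$ from every admissible centre of $C_x$, which breaks your forward direction. Second, with constant-size gadgets the sizes do not separate $x$ from $y$. What separates them is that the sets of admissible centres $\{x,x_1,x_2,x_3\}$ and $\{y,y_1,y_2,y_3\}$ are disjoint.
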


\begin{proof}
We give a polynomial parameter transformation from \textsc{Red-Blue Dominating Set}
parameterized by $|B|+k$.
Let $(G=(R \uplus B,E),k)$ be an instance, and set $n:=|R|$ and $t:=|B|$.
We may assume $k \le n$ and $t \ge 1$.
As preprocessing, we add isolated vertices to $R$ so that $n \ge 2k+t-9$.
This does not change the instance.
Define
$p := 9+n-2k-t \ge 0$.

We construct $(G',d,\ell,u)$ as follows.
First, turn $R$ into a clique. Add a set $P$ of $p$ new vertices and make
$R \cup P$ a clique.
We now add a gadget rooted at a vertex $x_0$.
Add vertices $x_1,\dots,x_8$ and edges:
\begin{itemize}
    \item $x_0x_1$, $x_1x_2$, $x_2x_3$,
    \item $x_0x_4$, $x_4x_5$, $x_5x_6$,
    \item $x_0x_7$, $x_7x_8$.
\end{itemize}
Finally, make $x_8$ adjacent to every vertex of $R$.
For every edge $rb \in E$, we keep the same edge in $G'$.
Set
$u := 9+n-k$, $\ell := u$, and $d:=3$.
Then
$k+t+p = 9+n-k = u$
and
$|V(G')| = n+t+p+9 = 2u$.

\medskip
\noindent
\textbf{Parameter bound.}
Let
$X := B \cup \{x_0,\dots,x_8\}$.
Then $|X| = t+9$, and $G'-X$ is the clique $R \cup P$.
Hence the parameter is bounded polynomially.

\medskip
\noindent
\textbf{Forward direction.}
Let $S \subseteq R$ be a dominating set of size $k$.
Define
$C_0 := \{x_0,\dots,x_8\} \cup (R \setminus S)$, and
$C_1 := S \cup B \cup P$.
Then $|C_0| = 9 + (n-k) = u$ and $|C_1| = k+t+p = u$.
The graph $G'[C_0]$ is connected and $x_0$ is a center:
every vertex is within distance at most $3$.
The graph $G'[C_1]$ is connected since $S \cup P$ is a clique
and $S$ dominates $B$.
Moreover, $x_8$ is a center: vertices of $S$ are adjacent,
vertices of $P$ are at distance $2$, and vertices of $B$
are at distance at most $2$ via $S$.

\medskip
\noindent
\textbf{Backward direction.}
Let $(G',d,\ell,u)$ be a yes-instance.
Let $C_0$ be the cluster containing $x_0$.
By connectivity arguments, all vertices $x_1,\dots,x_8$ must lie in $C_0$.
We now argue that $x_0$ must be the center of $C_0$.
Observe that $x_3$ and $x_6$ are at distance $6$, and the only vertex
within distance $3$ from both is $x_0$.
Hence $x_0$ is the unique possible center.
Therefore every vertex of $C_0$ must be at distance at most $3$ from $x_0$.
Vertices of $B$ and $P$ are at distance at least $4$ from $x_0$,
so they cannot lie in $C_0$.
Thus $C_0 \subseteq \{x_0,\dots,x_8\} \cup R$.
Since $|C_0|=u=9+n-k$, it follows that $C_0$ contains exactly $n-k$
vertices of $R$.

Hence the other cluster $C_1$ contains exactly $k$ vertices of $R$
together with all vertices of $B$ and $P$.
Since $G'[C_1]$ is connected, every $b \in B$ must have a neighbor in
$C_1 \cap R$, and thus this set forms a dominating set of size $k$.
This completes the reduction and proves the claim.
\end{proof}

\begin{theorem}\label{no kernel u+s}
Let $\mathcal{G}$ be a graph class such that for every $s \in \mathbb{N}$, the class $\mathcal{G}$ contains a connected graph on $s$ vertices with at least one pendant vertex. Unless coNP $\subseteq$ NP/poly, \textsc{Connected Network Microaggregation} does not admit a polynomial kernel when parameterized by the vertex deletion distance to $\mathcal{G}$ and $u$, even for all $d \ge u$.
\end{theorem}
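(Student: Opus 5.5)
Since $d \ge u$ makes the distance constraint vacuous (a connected cluster on at most $u$ vertices has diameter at most $u-1$, so any of its vertices is a valid center), the plan is an OR-cross-composition into the problem of partitioning $V(G)$ into connected parts of sizes in $[\ell,u]$. The source will be an NP-hard version of that same problem with constant $u$. The previous NP-hardness remark gives hardness for fixed $u$. Concretely, I will use the NP-hardness of partitioning a graph into connected subgraphs of size exactly $3$ (partition into paths $P_3$), with $\ell=u=3$; this is NP-hard even on bounded-degree bipartite graphs. The source instances are $(H_1,\dots,H_T)$, all with $N$ vertices and $N$ divisible by $3$; the polynomial equivalence relation simply groups instances by $N$.

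The composed instance uses cluster size $u':=$ polynomial in $N$ and $\log T$ (or even in $N$ alone), and its modulator has size polynomial in $N+\log T$. The key idea is that the graphs of $\mathcal{G}$ with a pendant vertex supply large connected ``blobs'' of arbitrary size $s$, each attachable through a pendant vertex. The construction is as follows.
\begin{itemize}
\item Take a shared vertex set $W$ of $N$ vertices playing the role of $V(H_j)$ for every $j$. Encoding all the edge sets $E(H_j)$ on the same $N$ vertices is impossible directly, so instead each instance $H_j$ gets its own copy $V_j$ of the vertices.
\item Add a selector. A global modulator $Y$ of size $O(N^2)$ contains one vertex per potential edge type; alternatively, use the standard trick where all but one instance are ``swallowed''. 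For each $j$, attach to the copy $V_j$ a graph $F_j\in\mathcal{G}$ of size exactly $u'-N$, through its pendant vertex, to a hub vertex $h_j$ adjacent to all of $V_j$.
\item Choose $u'$ so that $F_j\cup V_j\cup\{h_j\}$ is exactly one cluster (swallowing instance $j$ whole) for every instance except one. The selected instance $j^\ast$ instead has $V_{j^\ast}$ partitioned into triples while $F_{j^\ast}$ absorbs some padding vertices from a global reservoir of size exactly $N$ placed in the modulator.
\end{itemize}
The modulator then consists of the hubs, the reservoir and the $V_j$'s. That is too large, since it is $T\cdot N$, so the plan must be refined. I will replace the per-instance copies by a single $O(N^2)$-size encoding of the union of edges, using the $\log T$-bit selection trick with binary-tree ID gadgets whose parts are, again, pendant-attached graphs of $\mathcal{G}$.

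The main obstacle is exactly this step: getting the modulator size independent of $T$ while keeping $u$ polynomial. If it fails, I will fall back on composing from \textsc{Red-Blue Dominating Set} or from a multicolored variant, as in the preceding two theorems. There, $|B|+k$ bounds the deletion set, and the red side $R$ would be realized inside the $\mathcal{G}$-part: each red vertex becomes the pendant vertex of its own member of $\mathcal{G}$ of a carefully chosen size, so that selecting it into a cluster forces size accounting. That fallback is a polynomial parameter transformation rather than a composition. It would need $u$ polynomial in $|B|+k$, which sizes encoded in unary via $\mathcal{G}$-graphs of size about $k\cdot\mathrm{poly}$ could achieve. Correctness in either route is checked as before: pendant vertices force whole blobs into one cluster, and exact cluster sizes $\ell=u$ pin down the count of selected items.
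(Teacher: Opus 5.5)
\textbf{There is a genuine gap: neither of your two routes reaches an actual construction.}

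\textbf{The cross-composition route.} It stops at the decisive step. Your own sketch has a modulator of size $T\cdot N$. The promised repair (one shared $O(N^2)$ encoding of the union of edges plus ``binary-tree ID gadgets'') is never specified. Once all instances live on a common vertex set, all their edges are present at once, and you give no mechanism that neutralises the edges of non-selected instances. The sketch is also inconsistent about cluster sizes. The target has a single window $[\ell,u]$, yet you want the selected instance cut into triples while the other clusters have size $u'$. If $\ell=u=u'$, triples are not clusters at all. If the window contains both $3$ and $u'$, nothing forces non-selected instances to be swallowed whole, nor the selected one to respect the $P_3$ structure. (Also $|F_j\cup V_j\cup\{h_j\}|=u'+1$, not $u'$.) As written, this route does not establish the statement.

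\textbf{The fallback route.} This one has the right source: the paper also gives a polynomial parameter transformation from \textsc{Red-Blue Dominating Set} parameterized by $|B|+k$. Your idea that each red vertex $r$ becomes the pendant vertex $q_r$ of its own $H_r\in\mathcal{G}$, with $q_r$ joined to the blue neighbours of $r$, is exactly the paper's. However, ``exact cluster sizes pin down the count of selected items'' is not automatic, and the two gadgets that do the accounting are missing.

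The paper proceeds as follows.
\begin{itemize}
\item It takes $|V(H_r)|=u=\ell=|B|+k+1$.
\item A gadget that gives away $q_r$ keeps only $u-1$ vertices, so it cannot be a cluster by itself. The paper therefore adds $Z=\{z_1,\dots,z_k\}$, with every $z_j$ adjacent to every $s_r$ (the neighbour of $q_r$). Each $z_j$ must complete exactly one $V(H_r)\setminus\{q_r\}$, and every gadget that sheds its pendant consumes a distinct $z_j$. Hence exactly $k$ gadgets are split. This is the counting device.
\item It adds a hub $b^\star$ adjacent to all $q_r$. The leftover set $B\cup\{b^\star\}\cup\{q_r : r\in S\}$ then forms one cluster of size exactly $t+1+k=u$. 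A blue vertex $b$ is adjacent only to pendants $q_r$ with $rb\in E$, so connectivity of this cluster is equivalent to $S$ dominating $B$.
\item The modulator is $B\cup\{b^\star\}\cup Z$, which has size $u$.
\item The backward direction also needs the small lemma that $V(H_r)\setminus\{q_r\}$ always lies in one cluster. A piece of it avoiding $s_r$ has no outside neighbours and at most $u-2$ vertices, so it cannot be a cluster.
\end{itemize}

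Without something playing the role of $Z$ and of $b^\star$, your sketch neither ties the number of selected red vertices to $k$ nor forces the selected pendants and $B$ into a single cluster whose connectivity encodes domination.
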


\begin{proof}
We give a polynomial parameter transformation from \textsc{Red-Blue Dominating Set} parameterized by $|B|+k$.
Recall that an instance of \textsc{Red-Blue Dominating Set} consists of a bipartite graph $G=(R \uplus B,E)$ and an integer $k$, and the task is to decide whether there exists a set $S \subseteq R$ of size at most $k$ such that every vertex of $B$ has a neighbor in $S$.
It is known that this problem does not admit a polynomial kernel when parameterized by $|B|+k$, unless coNP $\subseteq$ NP/poly.

Let $(G=(R \uplus B,E),k)$ be an instance of \textsc{Red-Blue Dominating Set}.
Set $t:=|B|$ and $n:=|R|$. We may assume that $k \le n$ and $t \ge 1$, since otherwise the instance is trivial.
We construct an instance $(G',d,\ell,u)$ of \textsc{Connected Network Microaggregation} as follows.
Let $u := t+k+1$, $\ell := u$, and let $d \ge u$ be arbitrary.

For every vertex $r \in R$, choose a connected graph $H_r \in \mathcal{G}$ on exactly $u$ vertices with at least one pendant vertex. Let $q_r$ be a pendant vertex of $H_r$, and let $s_r$ be its unique neighbor in $H_r$.
For every edge $rb \in E$ of the original graph, we add the edge $q_r b$.
We add one new vertex $b^\star$ and make it adjacent to every vertex $q_r$, $r \in R$.
Finally, we add a set $Z=\{z_1,\dots,z_k\}$ of $k$ new vertices and make every $z_j$ adjacent to every vertex $s_r$, $r \in R$.
This completes the construction of $G'$.

\medskip
\noindent
\textbf{Parameter bound.}
Let $X := B \cup \{b^\star\} \cup Z$.
Then $|X| = t+1+k = u$.
Moreover, $G'-X$ is the disjoint union of the graphs $H_r$, $r \in R$.
Hence the vertex deletion distance of $G'$ to $\mathcal{G}$ is at most $u$.
Therefore the combined parameter, vertex deletion distance to $\mathcal{G}$ plus $u$, is at most $2u=2(t+k+1)$, which is polynomial in $|B|+k$.

\medskip
\noindent
\textbf{Forward direction.}
Assume that $(G,k)$ is a yes-instance of \textsc{Red-Blue Dominating Set}.
Then there exists a set $S \subseteq R$ of size at most $k$ dominating $B$.
By adding arbitrary vertices of $R\setminus S$ if necessary, we may assume that $|S|=k$.
Choose an arbitrary bijection $\varphi:S \to [k]$.
We define a partition of $V(G')$ into connected sets of size exactly $u$ as follows.

For every $r \in R\setminus S$, let $C_r := V(H_r)$.
For every $r \in S$, let $C_r := \bigl(V(H_r)\setminus\{q_r\}\bigr)\cup\{z_{\varphi(r)}\}$.
Finally, let $C^\star := B \cup \{b^\star\} \cup \{q_r : r\in S\}$.

We first check the sizes. For every $r\in R\setminus S$, we have $|C_r|=u$.
For every $r\in S$, we have $|C_r|=(u-1)+1=u$.
Also, $|C^\star| = |B|+1+|S| = t+1+k = u$.
Hence these sets form a partition of $V(G')$ into sets of size exactly $u$.

We now check connectedness.
Each set $C_r$ with $r\in R\setminus S$ induces the connected graph $H_r$, and hence is connected.
Each set $C_r$ with $r\in S$ is connected because $q_r$ is a pendant vertex of $H_r$, so $H_r-\{q_r\}$ is connected, and the vertex $z_{\varphi(r)}$ is adjacent to $s_r$.
Finally, $C^\star$ is connected: the vertex $b^\star$ is adjacent to every $q_r$ with $r\in S$, and every vertex $b\in B$ has a neighbor in $\{q_r:r\in S\}$ because $S$ dominates $B$ in the original instance.

Since every cluster has size exactly $u$ and is connected, its diameter is at most $u-1<d$.
Therefore any vertex of the cluster can serve as a center.
Hence $(G',d,\ell,u)$ is a yes-instance of \textsc{Connected Network Microaggregation}.

\medskip
\noindent
\textbf{Backward direction.}
Assume now that $(G',d,\ell,u)$ is a yes-instance. Since every cluster has size exactly $u$, the distance constraint is vacuous for all $d \ge u$, and we only need to reason about connectivity and size.
Since $\ell=u$ and every cluster has size exactly $u$, every feasible solution is a partition of $V(G')$ into connected sets of size exactly $u$.

\begin{claim}
For every $r\in R$, the set $V(H_r)\setminus\{q_r\}$ is contained in a single cluster.
\end{claim}

\begin{proof}
Suppose not. Since $H_r-\{q_r\}$ is connected, there exists a cluster $C$ that intersects $V(H_r)\setminus\{q_r\}$ but does not contain $s_r$.
Now every vertex of $C \cap (V(H_r)\setminus\{q_r\})$ has no neighbor outside $H_r$, and none of them is adjacent to $q_r$, because $q_r$ is a pendant vertex adjacent only to $s_r$.
Hence $C$ cannot contain any vertex outside $V(H_r)\setminus\{q_r\}$.
Therefore $|C| \le u-2 < u$, contradicting the fact that every cluster has size exactly $u$.
\end{proof}

\begin{claim}\label{claim 2}
For every $j\in[k]$, the vertex $z_j$ belongs to a cluster of the form $\bigl(V(H_r)\setminus\{q_r\}\bigr)\cup\{z_j\}$ for a unique $r\in R$.
\end{claim}

\begin{proof}
The vertex $z_j$ is adjacent only to the vertices $s_r$, $r\in R$.
Hence, for the cluster containing $z_j$ to be connected, it must contain some vertex $s_r$.
By Claim~1, this implies that it contains all of $V(H_r)\setminus\{q_r\}$.
Since $|V(H_r)\setminus\{q_r\}|=u-1$, the cluster can contain no further vertex.
Thus it is exactly $\bigl(V(H_r)\setminus\{q_r\}\bigr)\cup\{z_j\}$.
Uniqueness follows because two distinct such sets would together contain vertices from two distinct gadgets and therefore have size at least $2u-1>u$.
\end{proof}

\begin{claim}\label{claim 3}
Exactly $k$ gadgets are split, that is, there are exactly $k$ vertices $r\in R$ such that $q_r$ does not lie in the same cluster as $V(H_r)\setminus\{q_r\}$.
\end{claim}

\begin{proof}
By Claim~\ref{claim 2}, every vertex $z_j$ is paired with one set $V(H_r)\setminus\{q_r\}$, and different vertices $z_j$ are paired with different gadgets. Hence at least $k$ gadgets are split.

Conversely, if a gadget $H_r$ is split, then by Claim~1 the set $V(H_r)\setminus\{q_r\}$ must lie in one cluster.
Since this set has size $u-1$, that cluster must contain one additional vertex.
The only possible such vertex is some $z_j$, because vertices of $B\cup\{b^\star\}$ are adjacent only to $q_r$ and not to any vertex of $V(H_r)\setminus\{q_r\}$.
Hence every split gadget consumes one distinct vertex of $Z$, and therefore at most $k$ gadgets are split.
Thus exactly $k$ gadgets are split.
\end{proof}

By Claim~\ref{claim 3}, there are exactly $k$ gadgets whose pendant vertices are separated from the rest.
Let $S' := \{\, r\in R : q_r \text{ is not in the same cluster as } s_r \,\}$.
Then $|S'|=k$.
For every $r\notin S'$, the whole gadget $H_r$ must form a cluster of size $u$.
For every $r\in S'$, one cluster is $\bigl(V(H_r)\setminus\{q_r\}\bigr)\cup\{z_j\}$ for some $j\in[k]$, and the only vertex of $H_r$ not yet assigned is $q_r$.
Consequently, after removing all these clusters, the remaining vertices are exactly $B \cup \{b^\star\} \cup \{q_r : r\in S'\}$.
This set has size $|B|+1+|S'| = t+1+k = u$, and therefore it forms one final cluster, say $C^\star$.

We now show that $S'$ dominates $B$ in the original graph.
Consider any vertex $b\in B$.
Inside $C^\star$, the vertex $b$ is adjacent only to those vertices $q_r$ such that $rb\in E$ in the original graph.
It is not adjacent to any vertex of $B$, nor to $b^\star$.
Hence, if $b$ had no neighbor among the vertices $\{q_r:r\in S'\}$, then $b$ would be isolated in $G'[C^\star]$, contradicting the connectedness of $C^\star$.
Therefore every vertex of $B$ has a neighbor in $\{q_r:r\in S'\}$, and so $S'$ is a red-blue dominating set of size exactly $k$ for $(G,k)$.

Thus $(G,k)$ is a yes-instance if and only if $(G',d,\ell,u)$ is a yes-instance.
Since the parameter of the constructed instance is polynomially bounded in $|B|+k$, this is a polynomial parameter transformation.
Therefore, unless coNP $\subseteq$ NP/poly, \textsc{Connected Network Microaggregation} does not admit a polynomial kernel when parameterized by the vertex deletion distance to $\mathcal{G}$ and $u$.
\end{proof}

\begin{corollary}
Unless coNP $\subseteq$ NP/poly, \textsc{Connected Network Microaggregation}
does not admit a polynomial kernel when parameterized by the vertex integrity of the
input graph together with $u$, even for all $d \ge u$.
\end{corollary}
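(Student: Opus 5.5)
This corollary follows from Theorem~\ref{no kernel u+s} by choosing the graph class $\mathcal{G}$ so that vertex integrity is controlled. Vertex integrity is not itself a deletion distance to a fixed class, so instead of citing the theorem directly I will reuse its reduction with a specific choice of gadget. The idea is to take each $H_r$ to be a path on exactly $u$ vertices; an endpoint is a pendant vertex. The class of paths then satisfies the hypothesis of Theorem~\ref{no kernel u+s}: for every $s$ it contains a connected graph on $s$ vertices with a pendant vertex.

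Concretely, start from a \textsc{Red-Blue Dominating Set} instance $(G=(R\uplus B,E),k)$ with $t=|B|$. Build $G'$ exactly as in the proof of Theorem~\ref{no kernel u+s}, with $u=\ell=t+k+1$, $d\ge u$ arbitrary, and each $H_r$ the path $P_u$. The correctness argument there (forward direction and Claims~1--3) used only that $H_r$ is connected, has exactly $u$ vertices, and has a pendant vertex $q_r$ with $H_r-q_r$ connected. All of these hold for a path, so $(G,k)$ is a yes-instance if and only if $(G',d,\ell,u)$ is.

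The remaining step is to bound the vertex integrity of $G'$. Take $X:=B\cup\{b^\star\}\cup Z$, so $|X|=t+1+k=u$. Then $G'-X$ is the disjoint union of the paths $H_r$, and each component has exactly $u$ vertices. Hence $\mathrm{vi}(G')\le |X|+u=2u=2(t+k+1)$. The combined parameter $\mathrm{vi}(G')+u$ is therefore at most $3(t+k+1)$, which is polynomial in $|B|+k$. This gives a polynomial parameter transformation from \textsc{Red-Blue Dominating Set} parameterized by $|B|+k$, a problem with no polynomial kernel unless $\mathrm{coNP}\subseteq\mathrm{NP/poly}$.

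The one point needing care is that the conclusion does not follow from the statement of Theorem~\ref{no kernel u+s} alone, since deletion distance to paths does not bound vertex integrity. It relies on the fact that in that construction every component of $G'-X$ has exactly $u$ vertices. That is why I re-run the construction explicitly rather than cite the theorem. The lower bound holds for all $d\ge u$ because every cluster is connected with $u$ vertices, so its diameter is below $d$ and the distance constraint is vacuous.
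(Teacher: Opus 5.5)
Your proposal is correct and follows the paper's proof essentially verbatim. You reuse the construction of Theorem~\ref{no kernel u+s}, take $X=B\cup\{b^\star\}\cup Z$ with $|X|=u$, observe that every component of $G'-X$ is some $H_r$ on exactly $u$ vertices, and conclude $\mathrm{vi}(G')+u\le 3u=3(|B|+k+1)$. Fixing each $H_r$ to be a path is harmless but unnecessary, since the paper leaves the gadgets $H_r$ arbitrary; what matters, as you correctly point out, is that the components of $G'-X$ have size exactly $u$.
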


\begin{proof}
We use the same construction as in Theorem~\ref{no kernel u+s}.
Let $X := B \cup \{b^\star\} \cup Z$. By construction,
$|X| = |B| + 1 + k = u$.
Moreover, every connected component of $G' - X$ is one of the graphs $H_r$, and each
such graph has exactly $u$ vertices. Hence
\[
\mathrm{vi}(G') \le |X| + \max\{|V(C)| : C \text{ is a component of } G'-X\}
\le u + u = 2u.
\]
Therefore, $\mathrm{vi}(G') + u \le 3u = 3(|B|+k+1)$,
which is polynomial in the parameter of the input instance.
Since the reduction in Theorem~\ref{no kernel u+s} is correct, this yields a polynomial-
parameter transformation from \textsc{Red-Blue Dominating Set} parameterized by
$|B|+k$ to \textsc{Connected Network Microaggregation} parameterized by
$\mathrm{vi}+u$. The claimed kernel lower bound follows.
\end{proof}

\begin{corollary}\label{cor:paths-stars-kernel-lb}
Unless coNP $\subseteq$ NP/poly, \textsc{Connected Network Microaggregation}
does not admit a polynomial kernel under each of the following parameterizations:
\begin{enumerate}
    \item vertex deletion distance to paths together with $u$,
    \item vertex deletion distance to stars together with $u$,
\end{enumerate}
even for all $d \ge u$.
\end{corollary}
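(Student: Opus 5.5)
We obtain both statements as direct instantiations of Theorem~\ref{no kernel u+s}, so the whole task is to check its single hypothesis for the class of paths and for the class of stars. Recall the hypothesis: for every $s \in \mathbb{N}$, $\mathcal{G}$ must contain a connected graph on $s$ vertices with at least one pendant vertex. Once this is verified, Theorem~\ref{no kernel u+s} gives the kernel lower bound for the parameter ``vertex deletion distance to $\mathcal{G}$ plus $u$'' for all $d \ge u$, which is exactly what is claimed.

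For paths, we take $\mathcal{G}$ to be the class of all paths and use the path $P_s$ on $s$ vertices. For $s \ge 2$, either endpoint of $P_s$ has degree one, so it is a pendant vertex. For stars, we take $\mathcal{G}$ to be the class of all stars and use $K_{1,s-1}$. For $s \ge 2$, every leaf is a pendant vertex whose unique neighbour is the center.

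The only point needing care is the degenerate small values of $s$, since $K_1$ has no pendant vertex. However, the construction in Theorem~\ref{no kernel u+s} only ever picks graphs $H_r$ on exactly $u = t+k+1$ vertices. As $t \ge 1$ and we may assume $k \ge 1$ (for $k=0$ the Red-Blue instance is trivially decided), we have $u \ge 3$. So the hypothesis is needed only for $s \ge 3$, and there the sizes used are fine.

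One further check concerns the deletion set $X = B \cup \{b^\star\} \cup Z$. Deleting it leaves a disjoint union of the graphs $H_r$. The notion of ``vertex deletion to paths/stars'' should therefore be read as deletion to a disjoint union of paths (a linear forest) or of stars, which is the standard meaning. Under that reading, $G'-X$ lies in the target class and the parameter is at most $2u$, polynomial in $|B|+k$. I expect this interpretive point, rather than any calculation, to be the only potential obstacle.
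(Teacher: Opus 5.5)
Your proposal is correct and follows the paper's proof, which likewise obtains both items by instantiating Theorem~\ref{no kernel u+s} with the class of paths and the class of stars. Your two extra remarks are sound refinements that the paper leaves implicit: the hypothesis as literally stated fails for $s=1$ (since $K_1$ has no pendant vertex), but the construction only uses $s=u=t+k+1\ge 3$; and the paper's own argument already reads deletion distance to $\mathcal{G}$ as deletion to a disjoint union of members of $\mathcal{G}$.
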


\begin{proof}
The statement follows directly from Theorem~\ref{no kernel u+s}.

Indeed, the class of paths contains a connected graph on $s$ vertices with a pendant vertex for every $s \in \mathbb{N}$, namely the path on $s$ vertices. Similarly, the class of stars contains such graphs, where every leaf is a pendant vertex.

Therefore, both classes satisfy the conditions of Theorem~\ref{no kernel u+s}, and the result follows.
\end{proof}

\section{W[1]-hardness Results}

\begin{theorem}\label{thm:gvd-hardness-large-d-gap}
Let $\mathcal{G}$ be a graph class such that for every $s \in \mathbb{N}$, the class $\mathcal{G}$ contains a connected graph on $s$ vertices. Then, for every fixed integer $r \ge 0$, \textsc{Connected Network Microaggregation} is $\mathrm{W}[1]$-hard when parameterized by the vertex deletion distance of the input graph to $\mathcal{G}$, even when $d \ge 2$ and $u-\ell=r$.
\end{theorem}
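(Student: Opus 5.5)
We reduce from a W[1]-hard problem with unary-encoded numbers in which a bounded number of "bins" must each receive a prescribed total. The natural choice is \textsc{Unary Bin Packing} parameterized by the number of bins $k$, which is W[1]-hard. An instance consists of item sizes $a_1,\dots,a_n$ (in unary), $k$ bins and capacity $B$, and we may assume $\sum_i a_i = kB$.

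The construction has a deletion set $X$ of $O(k)$ vertices, and $G'-X$ is a disjoint union of connected graphs from $\mathcal{G}$.
\begin{itemize}
\item For every item $i$, pick a connected graph $H_i\in\mathcal{G}$ on exactly $s_i := M\cdot a_i$ vertices, where $M$ is a large scaling factor. Since we cannot control the internal structure of $H_i$, we make \emph{every} vertex of $H_i$ adjacent to the bin vertex of every bin.
\item Create bin vertices $b_1,\dots,b_k$ forming $X$. Each $b_j$ is adjacent to all vertices of all item gadgets.
\item Give each $b_j$ a private padding gadget. It is a connected graph from $\mathcal{G}$ of size $P$, attached to $b_j$ only, and we also put it into the deletion set side appropriately; if needed, it consists of pendant-like vertices adjacent only to $b_j$. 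Such a gadget forces itself into $b_j$'s cluster.
\end{itemize}
Set $\ell := 1+P+MB$ and $u := \ell + r$. Distances are at most $2$ from $b_j$ to everything in its cluster, so $d\ge 2$ is harmless. The role of $M>r$ (e.g.\ $M=r+1$) is to make the slack $r$ useless: cluster sizes are $1+P+M\cdot(\text{sum of items})$, and lying in $[\ell,\ell+r]$ forces the sum of items to be exactly $B$.

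For the forward direction, each bin's items together with $b_j$ and its padding form a connected cluster centred at $b_j$, of size exactly $\ell$.

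For the backward direction we have to show two things. First, every vertex of an item gadget lies in a cluster containing some $b_j$. Second, each $H_i$ is entirely in one cluster.

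For the first point, choose $P$ huge and $M a_i$ small relative to $\ell$, for instance $P > M\sum_i a_i$, so that $\ell > |V(G')\setminus(X\cup\text{paddings})|$. Then a cluster avoiding $X$ lies inside a single $H_i$ and has size less than $\ell$, which is impossible. Also, two bin vertices cannot share a cluster, since two paddings exceed $u$.

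The second point is the main obstacle, since a cluster containing $b_j$ may take only part of $H_i$. I would handle it as follows. Since $P$ is large, the counting is done modulo $M$. Replace each item vertex by a block that must stay together: attach item vertices to bins only through a single ``port'' vertex $p_i$ of $H_i$, and choose $H_i$ from $\mathcal{G}$ on $Ma_i$ vertices. A split piece of $H_i$ not containing $p_i$ cannot reach any bin, so it would form a cluster without $X$, contradicting the first point. Hence $H_i$ goes wholly with the cluster containing $p_i$. Its center may be $b_j$ only if distances within $H_i$ are small, which we cannot guarantee for arbitrary $\mathcal{G}$. So instead we additionally add edges from $b_j$ to all item vertices. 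These edges lie in $X$'s incidences and keep $G'-X$ unchanged, while making every vertex at distance $1$ from every $b_j$; the port argument then uses that non-port vertices are adjacent to all bins. The resulting fix is that splitting is penalised only by sizes modulo $M$.

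Concretely, I would take all $H_i$ of the same tiny size $M$, encode $a_i$ as a group of $a_i$ such blocks, and enforce grouping by a standard Unary Bin Packing variant. An alternative is to reduce from a problem where grouping is free (multicoloured clique-type number partition into $k$ parts, such as \textsc{Unary Bin Packing} with all blocks indistinguishable-sized). Verifying this grouping step is where I expect most of the work to lie.
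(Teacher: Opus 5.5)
There is a genuine gap, and it is the one you flagged yourself. Attaching each item gadget to the bins only through a port $p_i$ does make the gadgets unsplittable, but then the distance constraint fails. Take $\mathcal{G}$ to be the class of paths: no vertex is within distance $2$ of both a padding vertex of $b_j$ and the far end of a long $H_i$, so the forward direction breaks for $d=2$. The port-only construction only works when $d$ grows with the instance (e.g.\ $d\ge u$), whereas the theorem asks for every $d\ge 2$.

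Your repair destroys the port argument rather than complementing it. Once you add all edges between the $b_j$ and the item vertices, every fragment of $H_i$ is adjacent to every bin. Each bin cluster can then be $b_j$, its padding, and an \emph{arbitrary} set of $MB$ item vertices, which is connected and centred at $b_j$. Since $\sum_i a_i=kB$, the constructed instance is always a yes-instance, so nothing is penalised modulo $M$. The fallback with blocks of size $M$ has the same defect: nothing in the graph enforces the grouping, and if grouping is free the source problem has all items of size $1$ and is trivial.

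The idea you are missing is that a center need not lie in its cluster and may be shared by all clusters. The paper keeps essentially your port construction: $H_i$ is joined to every bin vertex $x_j$ only via $q_i$, bins are padded by pendants, and there is a scaling factor $\alpha$. It then adds a single hub vertex $t$ to the deletion set, adjacent to all $x_j$ and all item vertices. Now $t$ is within distance $2$ of every vertex and serves as the center of every cluster, for every $d\ge 2$.

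To stop $t$ from acting as a connector for item fragments, $t$ gets its own rigid cluster: $r$ pendants plus a gadget $Q\in\mathcal{G}$ on $\alpha B$ vertices adjacent only to $t$, with $\ell=\alpha B+c$. The total vertex count is $(k+1)\ell$. This forces exactly $k+1$ clusters of size exactly $\ell$, each containing exactly one of $x_1,\dots,x_k,t$. Since $Q$ and the pendants of $t$ reach the rest of the graph only through $t$, the cluster of $t$ is exactly $t$, its pendants and $V(Q)$. In every other cluster, which avoids $t$, your port argument then applies verbatim.

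This pinning of $t$'s cluster has to come before the no-splitting claim. Because of $t$, the port $q_i$ is not the only vertex of $H_i$ with neighbours outside $H_i$. The paper's write-up states the no-splitting claim first and glosses over this point.
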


\begin{proof}
We provide a parameterized reduction from \textsc{Unary Bin Packing}. 
An instance of \textsc{Unary Bin Packing} consists of item sizes $a_1,a_2,\dots,a_m$, a number of bins $k$, and a bin capacity $B$, where all numbers are given in unary. The task is to decide whether the items can be partitioned into $k$ bins so that the total size of the items assigned to each bin is at most $B$. As usual, we may assume without loss of generality that every bin must have total size exactly $B$, by adding dummy items of size $1$ if necessary.

Fix $r \ge 0$, and let $c:=r+1$. We choose
$\alpha := (k+1)c + r + 1$.
Note that $\alpha > kc+r$ and $\alpha > r$.
Let $I$ be an instance of \textsc{Unary Bin Packing}. 
We construct an instance $I'=(G,d,\ell,u)$ of \textsc{Connected Network Microaggregation} as follows.
For every $j\in [k]$, we introduce a vertex $x_j$, and we add $r$ pendant vertices adjacent only to $x_j$. Hence each bin gadget contributes exactly $c=r+1$ vertices.
Next, for every item of size $a_i$, we choose a connected graph $H_i \in \mathcal{G}$ on $\alpha a_i$ vertices, and select an arbitrary vertex $q_i \in V(H_i)$. For every $i\in [m]$ and every $j\in [k]$, we add the edge $\{q_i,x_j\}$.

Now we introduce one special vertex $t$, together with $r$ pendant vertices adjacent only to $t$. Further, we choose a connected graph $Q \in \mathcal{G}$ on $\alpha B$ vertices. We make every vertex of $Q$ adjacent to $t$, and finally we make $t$ adjacent to every vertex of the graph constructed so far, except for its own pendant vertices and the pendant vertices attached to the vertices $x_j$. This completes the construction of the graph $G$.
We set $\ell := \alpha B + c$ and $u := \ell + r = \alpha B + c + r$. Let $d \ge 2$ be arbitrary.

The set $X:=\{x_1,\dots,x_k,t\}$ is a vertex deletion set of size $k+1$ such that $G-X$ is the disjoint union of the connected graphs $H_1,\dots,H_m,Q$, together with isolated vertices, all of which belong to $\mathcal{G}$ by assumption. Thus the parameter of the constructed instance is at most $k+1$, and the construction is clearly computable in polynomial time.

\medskip
\noindent
\textbf{Forward direction.}
Assume first that we are given a packing of the items into $k$ bins of capacity exactly $B$. For each $j\in [k]$, let $D_j$ denote the set of items assigned to bin $j$. We define one cluster for each bin by
$R_j:=\{x_j\}\cup \{\text{the $r$ pendants of }x_j\}\cup \bigcup_{i\in D_j} V(H_i)$,
and one additional cluster
$R_{k+1}:=\{t\}\cup \{\text{the $r$ pendants of }t\}\cup V(Q)$.

These sets are pairwise disjoint and cover all vertices of $G$. Each cluster is connected: for every item $i\in D_j$, the distinguished vertex $q_i$ is adjacent to $x_j$, and since $H_i$ is connected, every vertex of $H_i$ lies in the same connected component as $q_i$. Also, every pendant of $x_j$ is adjacent to $x_j$. Similarly, $R_{k+1}$ is connected since every vertex of $Q$ is adjacent to $t$, and every pendant of $t$ is adjacent to $t$.

It remains to verify the size and distance constraints. By construction,
$|R_j| = c + \sum_{i\in D_j} |V(H_i)| = c + \alpha\sum_{i\in D_j} a_i = c + \alpha B = \ell$,
and
$|R_{k+1}| = c + |V(Q)| = c + \alpha B = \ell$.
Hence every cluster has size in $[\ell,u]$.
Finally, the vertex $t$ can serve as a center for every cluster. Indeed, every vertex of every item gadget is adjacent either to $t$ or to some $x_j$ adjacent to $t$, and every pendant is at distance at most $2$ from $t$. Hence every vertex in every cluster lies at distance at most $2$ from $t$, and therefore the distance constraint is satisfied for every $d \ge 2$. Thus $I'$ is a yes-instance of \textsc{Connected Network Microaggregation}.

\medskip
\noindent
\textbf{Backward direction.}
Assume now that $(G,d,\ell,u)$ admits a feasible connected microaggregation.
We first observe that every cluster must contain at least one of the vertices $x_1,\dots,x_k,t$. Indeed, suppose a cluster $R$ contains none of these vertices. Then $R$ is contained entirely in some gadget $H_i$, in the gadget $Q$, or in a set of isolated pendant vertices. In all cases, $|R| \le \alpha B < \alpha B + c = \ell$, a contradiction.

For a cluster $R$, let $q(R)$ denote the number of vertices from $\{x_1,\dots,x_k,t\}$ contained in $R$, and let $S(R)$ denote the total original weight of the item gadgets fully contained in $R$.
We first claim that if $q_i \in R$, then $V(H_i) \subseteq R$. Suppose not, and let $v \in V(H_i)\setminus R$. Let $R'$ be the cluster containing $v$. Since $H_i$ is connected and $q_i$ is the only vertex of $H_i$ adjacent to vertices outside $H_i$, every path from $v$ to any vertex outside $H_i$ must pass through $q_i$. However, since $q_i \notin R'$, no such path exists entirely within $G[R']$, contradicting the fact that $R'$ induces a connected subgraph. Hence $V(H_i) \subseteq R$.
It follows that every cluster $R$ consists of exactly $q(R)$ anchor gadgets (each contributing $c$ vertices), together with a collection of entire item gadgets of total scaled size $\alpha S(R)$.

Therefore every cluster has size
$|R| = \alpha S(R) + q(R)c$.
Since $R$ is feasible, we have
$\ell \le |R| \le u$,
that is,
$\alpha B + c \le \alpha S(R) + q(R)c \le \alpha B + c + r$.
Equivalently,
$0 \le \alpha(S(R)-B) + (q(R)-1)c \le r$.
We claim that this implies $S(R)=B$ and $q(R)=1$. First, if $S(R)\ge B+1$, then
$\alpha(S(R)-B) + (q(R)-1)c \ge \alpha > r$,
a contradiction. On the other hand, if $S(R)\le B-1$, then
$\alpha(S(R)-B) + (q(R)-1)c \le -\alpha + kc < 0$,
because $q(R)\le k+1$ and $\alpha > kc+r$. Hence $S(R)=B$. But then the inequality reduces to
$0 \le (q(R)-1)c \le r$.
Since $c=r+1$, this is only possible when $q(R)=1$.
Thus every feasible cluster contains exactly one anchor gadget and item gadgets of total original weight exactly $B$.

In particular, the cluster containing $t$ contains no item gadget, because the graph $Q$ already contributes weight $B$. Hence it is exactly
$\{t\}\cup \{\text{the $r$ pendants of }t\}\cup V(Q)$.
The remaining $k$ anchor gadgets are the bin gadgets corresponding to $x_1,\dots,x_k$, and each of their clusters contains item gadgets of total original weight exactly $B$.
Consequently, the feasible clustering induces a partition of the items into $k$ groups of total size exactly $B$, which is a solution to the original \textsc{Unary Bin Packing} instance.

Therefore, the constructed \textsc{Connected Network Microaggregation} instance is a yes-instance if and only if the original \textsc{Unary Bin Packing} instance is a yes-instance. Since the parameter of the constructed instance is at most $k+1$, the reduction is parameter-preserving, and the claimed $\mathrm{W}[1]$-hardness follows.
\end{proof}

\begin{remark}
The above result is particularly strong as it shows $\mathrm{W}[1]$-hardness even under very permissive settings of the input parameters, namely for every fixed gap $u-\ell$. Thus, parameterizing by $u-\ell$ alone does not lead to fixed-parameter tractability. Moreover, the reduction works already for every $d \ge 2$.
\end{remark}

\begin{corollary}
\textsc{Connected Network Microaggregation} is NP-hard even on graphs of clique-width at most $4$, even when $d \ge 2$ and $u-\ell = r$ for any fixed $r \ge 0$.
\end{corollary}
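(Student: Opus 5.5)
The plan is to reuse the reduction from \textsc{Unary Bin Packing} in Theorem~\ref{thm:gvd-hardness-large-d-gap}, instantiated with a graph class $\mathcal{G}$ whose members all have bounded clique-width. Since \textsc{Unary Bin Packing} is NP-hard (strongly NP-hard bin packing, with numbers encoded in unary), the polynomial-time reduction gives NP-hardness of \textsc{Connected Network Microaggregation} on the produced instances. What remains is to bound the clique-width of those instances by $4$. The natural choice is $\mathcal{G}$ equal to the class of cliques. Every clique $K_s$ is connected, so the hypothesis of Theorem~\ref{thm:gvd-hardness-large-d-gap} holds, and the theorem gives correctness for every $d\ge 2$ and every fixed gap $u-\ell=r$.

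The main step is to write an explicit $4$-expression for the constructed graph $G$ when every $H_i$ and $Q$ is a clique. I would use the following label roles. Label $1$ marks the vertices $q_i$, which must later be joined to all $x_j$ and to $t$. Label $2$ marks vertices that must later be adjacent only to $t$: the non-$q_i$ vertices of the item cliques and all of $Q$. Label $3$ is a temporary label for building cliques and bin gadgets. Label $4$ is a ``dead'' label for finished vertices that must receive no further edges, such as the pendants.

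The construction proceeds in four stages.
\begin{enumerate}
\item \emph{Item cliques.} Build each $H_i$ using two labels: add vertices one at a time with label $3$, apply $\eta$ to the existing label-$2$ part, and relabel to $2$. The vertex $q_i$ gets label $1$ and is joined via $\eta_{1,2}$ before the disjoint union with other gadgets. Because different item cliques are unioned only after their internal edges exist, I would construct all item gadgets disjointly, each using labels $\{1,2,3\}$, and then take their disjoint union. The union creates no unwanted edges, and later $\eta$ operations on labels $1$ and $2$ produce only the intended edges to $t$ and the $x_j$.
\item \emph{Bin gadgets.} Each $x_j$ with its $r$ pendants is built as follows: create the pendants with label $4$ and $x_j$ with label $3$, then apply $\eta_{3,4}$. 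At that point all earlier pendants are already connected to their own $x$, and they also carry label $4$, so this $\eta$ would wrongly join the new $x_j$ to them. To avoid this, I would build each bin gadget on its own (pendants temporarily labelled $2$, apply $\eta_{2,3}$, relabel $2\to4$), then union it into the main graph and apply $\eta_{1,3}$ to join $x_j$ to all $q_i$. After that, $x_j$ still needs an edge to $t$, so $x_j$ must keep a ``joins $t$'' label distinct from label $1$. The $x_j$ can be merged into label $2$, since vertices with label $2$ also need exactly one more edge, namely to $t$.
\item \emph{The clique $Q$.} Build $Q$ in label $2$ in the same way as the item cliques.
\item \emph{The vertex $t$.} Create $t$ with label $3$ and apply $\eta_{3,1}$ and $\eta_{3,2}$. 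Then add the pendants of $t$ by a separate gadget: build $t$ together with its pendants first, relabelling the pendants to $4$.
\end{enumerate}
Checking that at most $4$ labels are alive at any time, given that all $\eta$-operations happen in the right order, is the routine part.

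The main obstacle is the ordering issue in stages 2 and 4: each $\eta$ must add exactly the intended edges, with no pendant accidentally joined to a later center. I would handle this by always building each small gadget locally before the union, using labels $2$ and $3$, and then parking finished vertices in label $4$. If four labels turn out to be insufficient, the fallback is to note that the graph has clique-width bounded by a constant, namely cluster graph plus $k+1$ apex-like vertices handled in a fixed pattern. The stated constant $4$ should still be attainable by merging the roles of $x_j$ and the non-$q$ clique vertices as described above.
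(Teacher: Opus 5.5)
Your proposal is correct and follows the paper's route: instantiate the \textsc{Unary Bin Packing} reduction of Theorem~\ref{thm:gvd-hardness-large-d-gap} with cliques, and exhibit an explicit $4$-expression for the resulting graph. Your bookkeeping (build each item clique, each bin gadget and the $t$-gadget as separate sub-expressions, and keep all pendants in a label that $t$ is never joined to) is more careful than the paper's written expression, which has three faults: it applies $\eta_{4,2}$ while all $x_j$ carry label $2$, joining every pendant to every bin vertex; it joins $t$ to label $1$, which contains the $x_j$-pendants; and it tries to join $t$ to its own pendants while both carry label $4$.
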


\begin{proof}
We use the construction from Theorem~\ref{thm:gvd-hardness-large-d-gap}, with the additional choice that every graph $H_i$ and also the graph $Q$ are cliques. Since the reduction in Theorem~\ref{thm:gvd-hardness-large-d-gap} is correct for this special case as well, it suffices to show that every graph produced by this construction has clique-width at most $4$.
Recall that the constructed graph $G$ consists of the following parts:
\begin{itemize}
    \item $k$ bin vertices $x_1,\dots,x_k$, each with $r$ pendant vertices,
    \item for every item $a_i$, a clique $H_i$ on $\alpha a_i$ vertices with a distinguished vertex $q_i \in V(H_i)$,
    \item a clique $Q$ on $\alpha B$ vertices,
    \item one special vertex $t$ with $r$ pendant vertices,
\end{itemize}
where additionally each distinguished vertex $q_i$ is adjacent to every bin vertex, and $t$ is adjacent to every vertex except the pendant vertices.

We now describe a $4$-expression constructing $G$. Use labels $1,2,3,4$ with the following meaning: label $1$ for vertices that belong to already completed cliques, label $2$ for the bin vertices, and labels $3,4$ as temporary labels.
First, create the $k$ bin vertices and give them label $2$. For each $x_j$, create its $r$ pendant vertices one by one with label $4$, join labels $4$ and $2$, and relabel $4$ to $1$.
Next, for each item $i \in [m]$, construct the clique $H_i$ as follows. Create the distinguished vertex $q_i$ with label $4$, join labels $4$ and $2$, and relabel $4$ to $3$. Then create the remaining $\alpha a_i-1$ vertices of $H_i$ one by one: whenever a new vertex is created with label $4$, join labels $4$ and $3$, and then relabel $4$ to $3$. Finally, relabel all vertices of $H_i$ from label $3$ to label $1$.

Construct the clique $Q$ in the same way: create one vertex with label $3$, then repeatedly create a new vertex with label $4$, join labels $4$ and $3$, and relabel $4$ to $3$, until $Q$ is complete. Afterwards relabel all vertices of $Q$ from label $3$ to label $1$.
Next, create the vertex $t$ with label $4$ and join label $4$ to labels $1$ and $2$. Then create the $r$ pendant vertices of $t$ one by one with label $4$, join them to $t$, and relabel them to $1$.
It is immediate from the construction that the obtained graph is exactly the graph produced in Theorem~\ref{thm:gvd-hardness-large-d-gap}. Since we used only four labels, the clique-width of $G$ is at most $4$.

Hence the reduction of Theorem~\ref{thm:gvd-hardness-large-d-gap} yields NP-hardness even on graphs of clique-width at most $4$, and the hardness holds even when $d \ge 2$ and $u-\ell = r$.
\end{proof}

\begin{corollary}\label{cor:instantiations-gvd}
For every fixed integer $r \ge 0$, \textsc{Connected Network Microaggregation} is $\mathrm{W}[1]$-hard under each of the following parameterizations, even when $d \ge 2$ and $u-\ell=r$:
\begin{enumerate}
    \item vertex deletion distance to paths,
    \item vertex deletion distance to stars,
    \item cluster vertex deletion number,
    \item feedback vertex set number,
    \item treedepth.
\end{enumerate}
In particular, the problem is also $\mathrm{W}[1]$-hard when parameterized by pathwidth or treewidth.
\end{corollary}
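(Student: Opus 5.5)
The plan is to obtain every item as an instantiation of Theorem~\ref{thm:gvd-hardness-large-d-gap}, choosing the class $\mathcal{G}$ suitably. Then we bound the remaining structural parameters of the constructed graph in terms of $k$.

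For items~1 and~2, I take $\mathcal{G}$ to be the class of paths and the class of stars, respectively. Both contain a connected graph on $s$ vertices for every $s$, namely the path $P_s$ and the star $K_{1,s-1}$.

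There is one subtlety. In the construction, $G-X$ is the disjoint union of the graphs $H_i$, $Q$ and isolated pendant vertices. So the reduction really bounds the deletion distance to disjoint unions of graphs in $\mathcal{G}$, which I read as the intended meaning of ``vertex deletion to paths/stars''. Isolated vertices are trivial paths and stars, so this is fine.

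For item~3, I take $\mathcal{G}$ to be the class of cliques, so that $G-X$ is a cluster graph (cliques plus isolated vertices). Hence $\mathrm{cvd}(G)\le k+1$.

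For item~4, I again use paths (or any trees). Then $G-X$ is a forest, so the feedback vertex set number is at most $k+1$.

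For item~5, I use stars. With $X$ of size $k+1$, every component of $G-X$ is a star or an isolated vertex, and each such component has treedepth at most $2$. Adding the $k+1$ vertices of $X$ one at a time as a root path increases treedepth by at most one each. So $\mathrm{td}(G)\le k+3$, a function of the parameter $k$.

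The final sentence follows from standard inequalities. We have $\mathrm{tw}\le\mathrm{pw}\le\mathrm{td}-1$, and also $\mathrm{tw}\le\mathrm{fvs}+1$. So a parameterized reduction producing bounded treedepth also produces bounded pathwidth and treewidth, and $\mathrm{W}[1]$-hardness transfers.

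In every case the value of $d\ge 2$ and the gap $u-\ell=r$ are untouched by the choice of $\mathcal{G}$. Since Theorem~\ref{thm:gvd-hardness-large-d-gap} proves correctness for arbitrary connected $H_i$ and $Q$, nothing needs to be re-verified.

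The only step needing care is the treedepth bound. I would check it by giving an explicit elimination forest: a path through $x_1,\dots,x_k,t$, with each star component of $G-X$ hung below it as center then leaves. Every edge of $G$ then joins an ancestor–descendant pair, because edges either lie inside a star, inside $X$, or go between $X$ and the rest.
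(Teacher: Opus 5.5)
Your proposal is correct and follows essentially the same route as the paper: instantiate Theorem~\ref{thm:gvd-hardness-large-d-gap} with paths, stars and cliques, and read off the feedback vertex set and treedepth bounds from the structure of $G-X$. Hardness for treedepth then transfers to pathwidth and treewidth. The differences are cosmetic: you take the fvs bound from the path instantiation rather than the star one, and you make the treedepth bound explicit ($\mathrm{td}(G)\le k+3$ via a concrete elimination forest), where the paper only says it is bounded by a function of $k$.
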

\begin{proof}
We apply Theorem~\ref{thm:gvd-hardness-large-d-gap} with suitable choices of the graph class $\mathcal G$.

If $\mathcal G$ is the class of paths, we obtain $\mathrm{W}[1]$-hardness parameterized by vertex deletion distance to paths.

If $\mathcal G$ is the class of stars, we obtain $\mathrm{W}[1]$-hardness parameterized by vertex deletion distance to stars. Moreover, deleting the vertices $x_1,\dots,x_k,t$ leaves a forest, and hence the feedback vertex set number of the constructed graph is at most $k+1$. This yields hardness for feedback vertex set.

Furthermore, in the star instantiation, the resulting graph has treedepth bounded by a function of $k$. Hence the problem is $\mathrm{W}[1]$-hard parameterized by treedepth.

If $\mathcal G$ is the class of cliques, we obtain $\mathrm{W}[1]$-hardness parameterized by cluster vertex deletion number, since deleting $x_1,\dots,x_k,t$ leaves a disjoint union of cliques.

Finally, hardness for treedepth implies hardness for pathwidth and treewidth.
\end{proof}

\begin{theorem}\label{thm:ecp-to-cnma}
There is a polynomial-time parameter-preserving reduction from \textsc{Equitable Connected Partition} to \textsc{Connected Network Microaggregation}. In particular, given an instance $(G,r)$ of \textsc{Equitable Connected Partition} with $|V(G)|=n$ and $r \mid n$, one can construct in polynomial time an equivalent instance $(G,d,\ell,u)$ of \textsc{Connected Network Microaggregation} by setting
$d:=n$, $\ell:=n/r$ and .
\end{theorem}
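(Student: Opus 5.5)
The reduction keeps the graph $G$ unchanged and sets $d:=n$ and $\ell:=u:=n/r$; the statement's truncated ``and'' should be read as ``and $u:=n/r$''. Since $r \mid n$, an equitable connected partition into $r$ parts consists of $r$ connected parts, each of size exactly $n/r$. (Parts of sizes differing by at most one, summing to $n$ with $r \mid n$, must all have size $n/r$.) The plan is to show that feasible solutions of the two instances coincide. Parameter preservation then comes for free: the graph is untouched, so every structural parameter (pathwidth, feedback vertex set, vertex integrity, and so on) is unchanged. Moreover $r$ equals the number of clusters, and $u = n/r$ is part of the output.

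\textbf{Distance constraint.} First I show the distance constraint is vacuous. Any connected set $C$ of at most $n$ vertices has, in $G[C]$ and hence in $G$, all pairwise distances at most $|C|-1 \le n-1 < d$. So any vertex of $C$ can serve as its center. This is the same observation used in Theorem~\ref{thm:connected-dp-tw-u} and Theorem~\ref{no kernel u+s}.

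\textbf{Forward direction.} Given an equitable connected partition $V_1,\dots,V_r$, each part is connected and has size exactly $n/r = \ell = u$. Choosing any $c_i \in V_i$ gives a feasible clustering with $m=r$.

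\textbf{Backward direction.} Given a feasible clustering, every cluster has size exactly $n/r$, because $\ell=u=n/r$. Since the clusters partition $V$, there are exactly $n/(n/r) = r$ of them. Each cluster is connected, and all sizes are equal, so they differ by at most one. Hence this is an equitable connected partition into $r$ parts.

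\textbf{Expected obstacle.} There is essentially no hard step. The only points needing care are the divisibility assumption, which forces all part sizes to be equal, and noting that $d=n$ is polynomially bounded, so the construction runs in polynomial time.
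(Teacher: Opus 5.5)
Your proposal is correct and follows the paper's proof. You keep $G$ and set $d:=n$ and $\ell=u:=n/r$, which matches how the paper's proof completes the truncated statement; you note that any connected cluster of at most $n$ vertices has all internal distances at most $n-1<d$, so any of its vertices is a valid center; in the backward direction, $\ell=u=n/r$ forces exactly $r$ connected clusters of equal size; and parameter preservation follows because the graph is unchanged.
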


\begin{proof}
Let $(G,r)$ be an instance of \textsc{Equitable Connected Partition}, where $|V(G)|=n$ and $r \mid n$. We construct an instance $(G,d,\ell,u)$ of \textsc{Connected Network Microaggregation} on the same graph $G$ by setting $d:=n$ and $\ell=u:=n/r$. 
We show that the two instances are equivalent.

Assume first that $(G,r)$ is a yes-instance of \textsc{Equitable Connected Partition}. Since $r \mid n$, there is a partition $V(G)=V_1 \uplus \cdots \uplus V_r$ such that $|V_i|=n/r$ for every $i \in [r]$ and $G[V_i]$ is connected for every $i \in [r]$. We claim that $(V_1,\dots,V_r)$ is a feasible solution for the constructed \textsc{Connected Network Microaggregation} instance. Indeed, each part has size exactly $\ell=u=n/r$, and each induced subgraph $G[V_i]$ is connected. It remains to verify the distance condition. Since $G[V_i]$ is connected and $|V_i|\le n$, the distance between any two vertices in $V_i$ is at most $n-1$. Hence any vertex of $V_i$ may serve as a center, because every vertex of $V_i$ is at distance at most $n-1 < n = d$ from it. Therefore $(G,d,\ell,u)$ is a yes-instance.

Conversely, assume that $(G,d,\ell,u)$ is a yes-instance of \textsc{Connected Network Microaggregation}. Then there exists a partition $V(G)=C_1 \uplus \cdots \uplus C_t$ such that each $C_j$ induces a connected subgraph of $G$ and satisfies $\ell \le |C_j| \le u$. Since $\ell=u=n/r$, it follows that every cluster has size exactly $n/r$. As the clusters partition $V(G)$, we obtain
$t \cdot (n/r) = n$,
and hence $t=r$. Therefore $C_1,\dots,C_r$ form a partition of $V(G)$ into $r$ connected parts of equal size, which is a feasible solution to the \textsc{Equitable Connected Partition} instance $(G,r)$.

Thus, $(G,r)$ is a yes-instance if and only if $(G,d,\ell,u)$ is a yes-instance. Since the reduction keeps the graph unchanged, any graph parameter of the input graph is preserved.
\end{proof}

\begin{corollary}\label{cor:fes-hard}
\textsc{Connected Network Microaggregation} is $\mathrm{W}[1]$-hard when parameterized by the feedback edge set number of the input graph.
\end{corollary}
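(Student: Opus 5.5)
The plan is to compose the reduction of Theorem~\ref{thm:ecp-to-cnma} with the known $\mathrm{W}[1]$-hardness of \textsc{Equitable Connected Partition} (ECP). The reduction of Theorem~\ref{thm:ecp-to-cnma} leaves the graph $G$ unchanged. Therefore every graph parameter of the input, in particular the feedback edge set number, is preserved exactly, and the reduction runs in polynomial time. Composing it with any parameterized reduction that shows ECP is $\mathrm{W}[1]$-hard for feedback edge set number immediately gives the corollary.

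The key step is to cite the appropriate hardness result for ECP. Blažej et al.~\cite{blazej_et_al:LIPIcs.MFCS.2024.29} show that ECP is $\mathrm{W}[1]$-hard under several structural parameterizations; the parameter we need is the feedback edge set number. If their statement is phrased only for feedback vertex set or pathwidth, I would check whether their reduction yields graphs whose feedback edge set is bounded by a function of the parameter. Their reduction is from a Bin Packing type problem with long paths and few branching vertices, so I expect it does, possibly combined with the number of parts. I will take that result as the starting point and verify that its instances satisfy $r \mid n$. If they do not, I will note that ECP with $r \nmid n$ asks for part sizes $\lfloor n/r\rfloor$ and $\lceil n/r\rceil$. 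In that case I would instead set $\ell:=\lfloor n/r\rfloor$ and $u:=\lceil n/r\rceil$ and argue that any partition into connected parts of these sizes has exactly $r$ parts. This holds because $n/u \le t \le n/\ell$, and when $r\nmid n$ this interval contains only $r$ (for $n/r$ non-integral with $r \le n$, one has $\lceil n/\lceil n/r\rceil\rceil = r$ and $\lfloor n/\lfloor n/r\rfloor \rfloor = r$). This would repair the divisibility assumption.

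The main obstacle is therefore not the composition itself but confirming that ECP is indeed $\mathrm{W}[1]$-hard when parameterized by feedback edge set number alone, as opposed to only when it is combined with the number of parts. If the literature gives hardness only for $\mathrm{fes}+r$ or similar, the conclusion for $\mathrm{fes}$ still follows, since a larger combined parameter yields hardness for the smaller one. Finally, $d=n$ makes the distance constraint vacuous, which was already verified in Theorem~\ref{thm:ecp-to-cnma}, so nothing further is needed.
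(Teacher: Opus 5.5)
Your primary route is the paper's: Theorem~\ref{thm:ecp-to-cnma} leaves $G$ untouched, so $\mathrm{fes}$ is preserved, and composing it with $\mathrm{W}[1]$-hardness of ECP parameterized by $\mathrm{fes}$ on instances with $r\mid n$ gives the corollary. Your remark that hardness for a combined parameter such as $\mathrm{fes}+r$ already suffices is also correct. Your choice of source looks like the right one: as far as I know, the feedback-edge-set hardness of ECP is stated by Bla\v{z}ej et al.~\cite{blazej_et_al:LIPIcs.MFCS.2024.29}, whereas the paper attributes it to~\cite{10.1007/978-3-642-11269-0_10}.

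Your fallback for $r\nmid n$ is wrong, however. Write $n=qr+s$ with $q=\lfloor n/r\rfloor$ and $0<s<r$. With $\ell=q$ and $u=q+1$, every integer $t$ satisfying $qt\le n\le (q+1)t$ is an admissible number of parts.
\begin{itemize}
\item Since $\lfloor n/q\rfloor=r+\lfloor s/q\rfloor$, values $t>r$ become possible as soon as $s\ge q$.
\item Since $n/(q+1)=r-(r-s)/(q+1)$, values $t<r$ become possible as soon as $r-s>q$.
\end{itemize}
Concretely, $n=5$ and $r=3$ give $\ell=1$, $u=2$, and allow $t\in\{3,4,5\}$, so your identity $\lfloor n/\lfloor n/r\rfloor\rfloor=r$ fails ($\lfloor 5/1\rfloor=5$). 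More drastically, whenever $r>n/2$ you get $\ell=1$, and the CNMA instance is trivially a yes-instance via singleton clusters. Yet, for example, the star $K_{1,4}$ with $r=3$ is a no-instance of ECP, since two disjoint connected parts of size $2$ would both need the centre. The number of parts is forced to equal $r$ only under extra conditions such as $\lfloor n/r\rfloor\ge r$, since then $s<q$ and $r-s\le q$. So your argument is complete only through the primary branch. Like the paper, you must rely on the cited hardness holding on instances with $r\mid n$, or at least with $\lfloor n/r\rfloor\ge r$; otherwise drop the fallback or restrict it to that case.
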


\begin{proof}
It is known that \textsc{Equitable Connected Partition} is $\mathrm{W}[1]$-hard when parameterized by the feedback edge set number of the input graph~\cite{10.1007/978-3-642-11269-0_10}, even on instances satisfying $r \mid |V(G)|$. By Theorem~\ref{thm:ecp-to-cnma}, such an instance $(G,r)$ can be transformed in polynomial time into an equivalent instance $(G,d,\ell,u)$ of \textsc{Connected Network Microaggregation} on the same graph $G$. Since the graph is unchanged, the feedback edge set number is preserved. Hence \textsc{Connected Network Microaggregation} is $\mathrm{W}[1]$-hard under this parameterization.
\end{proof}

\section{Conclusion}
We conducted a systematic study of the parameterized complexity of \textsc{Connected Network Microaggregation} in the unweighted setting, focusing on the interplay between structural parameters and natural clustering parameters such as $d$, $u$, and $\ell$. Our results reveal a clear dichotomy. On the positive side, combining structural parameters with the cluster size bound $u$ yields fixed-parameter tractability, and even polynomial kernels in the case of vertex cover. On the negative side, structural parameters alone are insufficient: the problem remains $\mathrm{W[1]}$-hard and does not admit polynomial kernels under several natural parameterizations. Notably, these hardness results are robust and already hold under the strong restriction $d = 1$ and $\ell = u$, highlighting that neither tight distance bounds nor exact cluster sizes simplify the problem. 

These results naturally raise several directions for future work. While we obtain a polynomial kernel for $\mathrm{vc} + u$, it remains open whether we can obtain a polynomial kernel when parameterized by structural parameter such as twin-cover or neighbourhood diversity combined with $u$.  Finally, given the strong lower bounds for exact computation, exploring approximation or parameterized approximation algorithms for \textsc{CNMA} appears to be a promising direction, as does resolving the remaining open cases in the complexity landscape.

\bibliography{References}

\newpage

\end{document}